\sffamily\color{gray}\arabic*,
\quad\ctfont{[},
\setlist[description]{parsep=0.1px}
\newtheorem{definition}{Definition}
\numberwithin{definition}{section}
\newtheorem{theorem}[definition]{Theorem}
\newtheorem{lemma}[definition]{Lemma}
\theoremstyle{remark}
\newtheorem{remark}{Remark}[section]
\newtheorem{example}{Example}[section]
\newcommand{\set}[1]{\left\{#1\right\}}
\newcommand{\setdes}[2]{\left\{#1\;\middle|\;#2\right\}}
\def\xx{\mathbf{x}}
\def\yy{\mathbf{y}}
\def\tor{\mathbb{T}}
\def\Var{\mathbf{V}}
\def\NN{\mathbb{N}}
\def\RR{\mathbb{R}}
\def\CC{\mathbb{C}}
\def\ZZ{\mathbb{Z}}
\def\prja{\pi}
\def\prjb{p}
\def\vol{\operatorname{vol}}
\def\poly{\Delta}
\def\fc{\Phi}
\def\nc{\operatorname{N}}
\def\minpol{f_{\text{min}}}
\newcommand{\affspan}[1]{\langle #1 \rangle}
\newcommand{\mint}[1]{M_{\prjb}(#1)}
\newcommand{\mmo}{F_{\prjb}}
\newcommand{\conv}[1]{\operatorname{Conv}(#1)}
\newcommand{\type}{\operatorname{type}}
\newcommand{\abs}[1]{\left|#1\right|}
\newcommand{\norm}[1]{\left\lVert#1\right\rVert}
\newcommand{\twov}[2]{\begin{pmatrix} #1 \\ #2 \end{pmatrix}}
\DeclareMathOperator{\argmin}{argmin}
\DeclareMathOperator{\trop}{trop}
\definecolor{col0}{rgb}{1.0,0.0,0.0}
\definecolor{col1}{rgb}{0.0,1.0,0.0}
\definecolor{col2}{rgb}{0.0,0.0,1.0}
\definecolor{dark}{rgb}{0.4,0.4,0.4}
\title{On the Computation of Newton Polytopes of Eliminants}
\date{}
\author{Rafael Mohr \thanks{\href{mailto:rafael.mohr@inria.fr}{rafael.mohr@inria.fr}, Inria Saclay, Palaiseau, France}
  \and Yulia Mukhina \thanks{\href{mailto:yulia.mukhina@lix.polytechnique.fr}{yulia.mukhina@lix.polytechnique.fr}, LIX, CNRS, École Polytechnique, Palaiseau, France}
  }
\begin{document}

\maketitle

\begin{abstract}
  For systems of polynomial equations, we study the problem of
  computing the Newton polytope of their eliminants. As was shown by
  Esterov and Khovanskii, such Newton polytopes are {\em mixed fiber
    polytopes} of the Newton polytopes of the input equations. We use
  their results in combination with {\em mixed subdivisions} to design
  an algorithm computing these special polytopes.  We demonstrate the
  increase in practical performance of our algorithm compared to
  existing methods using tropical geometry and discuss the differences
  that lead to this increase in performance. We also demonstrate an
  application of our work to differential elimination.
\end{abstract}

\section{Introduction}
\label{sec:intro}

\paragraph{Problem Statement \& Motivation.}

Let $F:=\set{f_0,\dots,f_k}\in \CC[\xx^{\pm},\yy^{\pm}]$ be a sequence of
$k+1$ Laurent polynomials in two sets of variables
$\xx:=\set{x_1,\dots,x_{n-k}}$ and $\yy:=\set{y_1,\dots,y_{k}}$. If
$F$ defines a complete intersection $X:=\Var(F)$ in the
$n$-dimensional torus $\tor^n:=(\CC\setminus \set{0})^n$, then the projection
of $X$ to the $n-k$-dimensional torus $\tor_{\xx}^{n-k}$ corresponding
to the variables $\xx$ is a hypersurface $\Var(g)$,
$g\in \CC[\xx^{\pm}]$. Recall that the {\em Newton polytope} of a
polynomial is defined as the convex hull of its support.

The goal of this paper is to obtain a new algorithm which computes the
Newton polytope of $g$ from the Newton polytopes of the $f_i$.
Such an algorithm can be used, for example, to estimate the size of
$g$ and to check whether it is feasible to compute $g$ using symbolic
elimination techniques such as Gröbner bases (see e.g. Chapters 2 and
3 in \cite{cox2015a}).

Moreover, if there is a way to sample points on $X$, then knowing the
Newton polytope of $g$ can be used to interpolate $g$ by solving a
linear system, see e.g. \cite{rose2025} for details. This is in
particular the case for {\em implicitization problems}: in this case
$n-k=k+1$ (i.e. there is one more variable in $\xx$ than in $\yy$) and
the $f_i$ are of the form $f_i=x_i-h_i$ for certain
$h_i\in \CC[\yy^{\pm}]$. The problem of eliminating the variables in
$\yy$ is then equivalent to finding a defining equation $g$ for the image of
the map
\begin{align*}
  \varphi :\; &\tor^k\rightarrow \tor^{k+1}\\
  &y\mapsto (h_0(y),\dots,h_k(y)).
\end{align*}

\paragraph{Methodology \& Main Results.}

Our algorithm is fundamentally based on the work by
\cite{esterov2008}. In this work it is shown that, if the coefficients
of the $f_i$ are generic and upon an appropriate choice of $g$, the
Newton polytope of $g$ is the {\em mixed fiber polytope}
$\mmo(\poly_0,\dots,\poly_k)\subset \RR_{\xx}^{n-k}$ of the Newton polytopes
$\poly_i$ of the $f_i$, a notion that was independently also
introduced in \cite{mcmullen2004}. This construction is based on the
famous BKK-Theorem (\cite{bernshtein1975, kouchnirenko1976,
  khovanskii1995}, see also \cite{cox2005}) which estimates the number
of nonzero roots of a square polynomial system in terms of the {\em
  mixed volume} of its Newton polytopes. Mixed volumes relate to
volumes (as a certain multilinear, symmetric extension of volumes to
multiple arguments) as mixed fiber polytopes relate to {\em fiber
  polytopes}. The notion of fiber polytopes was introduced in
\cite{billera1992}. As the name suggests, fiber polytopes relate to
the fibers of a projection of polytopes. A useful way to think about
them is that the fiber polytope of a linear projection of polytopes
$\prjb : \poly\rightarrow \prjb(\poly)$ is the ``average fiber'' of this
projection. In our case, the projection with respect to which we will
compute {\em mixed} fiber polytopes is the projection
$\prjb:\RR^n\rightarrow \RR_{\yy}^k$ which maps support vectors of monomials in
$\CC[\xx^{\pm},\yy^{\pm}]$ to their entries corresponding to the variables
in $\yy$.

As was shown in the seminal work \cite{huber1995}, mixed volumes can
be computed from {\em mixed subdivisions} of collections of
polytopes. Roughly speaking, a subdivision of a collection of
polytopes is a certain partition of the Minkowski sum of these
polytopes into smaller polytopes (called {\em cells}). In
\cite{huber1995} it is shown that, given a {\em mixed} subdivision, the
mixed volume is a sum of the usual volumes of some of the cells of
this subdivision. Numerous algorithms exist to compute mixed
subdivisions, see e.g. \cite{verschelde1996, mizutani2007,
  malajovich2017, gao2000, dyer1998, chen2019, jensen2016}.

We design an analogous approach for mixed fiber polytopes: Theorem 12
in \cite{esterov2008} describes the vertices of the mixed fiber
polytope $\mmo(\poly_0,\dots,\poly_k)$ of the Newton polytopes
$\poly_0,\dots,\poly_k$ of $f_0,\dots,f_k$ as sums of mixed fiber
polytopes associated to certain $k$-dimensional faces of the Minkowski
sum $\poly$ of the $\poly_i$. For such a $k$-dimensional face the
associated mixed fiber polytope consists of a single point. The
relevant $k$-dimensional faces of $\poly$ induce a {\em coherent}
subdivision of the images $\prjb(\poly_i)$. Our main theoretical
result is then a formula to compute the desired vertex of
$\mmo(\poly_0,\dots,\poly_k)$ from a mixed subdivision refining this
given coherent subdivision (\Cref{thm:main}).

Based on \Cref{thm:main} we obtain a new algorithm to compute vertices
of mixed fiber polytopes (\Cref{alg:vert}). This algorithm is a {\em
  vertex oracle} for the mixed fiber polytope of interest, i.e. it
computes, given a generic covector $\gamma\in (\RR^{n-k})^{*}$, the vertex of
the mixed fiber polytope on which $\gamma$ is minimized. From such a vertex
oracle, all vertices can then be recovered using the algorithm given
in \cite{huggins2006}.

\paragraph{Related Works \& Context.}

One way to solve the problem we are interested in is based on
techniques using {\em sparse resultants} (see e.g. \cite{emiris2012,
  emiris2010, emiris2007, emiris2003, emiris2013} and
\cite{gelfand1994} for a comprehensive introduction to sparse
resultants). Namely, the Newton polytope of the desired eliminant $g$
is a projection of the Newton polytope $\poly_R$ of the sparse
$(\prjb(\poly_0),\dots,\prjb(\poly_k))$-resultant. This latter
polytope $\poly_R$ can be computed by enumerating its vertices using a
combinatorial procedure.
Note that the Newton polytope $\poly_R$ lives in a space of dimension
equal to the sum of the number of vertices of the $\prjb(\poly_i)$,
making the enumeration of all its vertices potentially burdensome. For
implicitization of curves or surfaces this is solved in
\cite{emiris2007,emiris2010}, where the enumeration procedure can be
refined to directly compute the needed projection of $\poly_R$.

We further mention the work \cite{hauenstein2014} which, compared to
our algorithm, includes also a numerical component. Here the Newton
polytope of a polynomial $g$ is computed when one can evaluate $g$ or
produce {\em witness sets} on $\Var(g)$, i.e. points on the
intersection of $\Var(g)$ with a linear subspace of suitable
dimension.

The strategy to compute Newton polytopes of eliminants that is closest
to our method uses {\em tropical elimination theory}
\cite{maclagan2015, sturmfels2008, sturmfels2007, rose2025,
  sturmfels2007a}. Section 4 in \cite{sturmfels2007} explicitly
demonstrates that mixed fiber polytopes can be computed using tropical
geometry. To highlight the differences with our algorithm let us
sketch partially how tropical elimination works. With generic input
$F:=\set{f_0,\dots,f_k}\in \CC[\xx^{\pm},\yy^{\pm}]$ and $X:=\Var(F)$ one
starts by computing the {\em tropicalization} $\trop(X)\subset \RR^n$. Here,
$\trop(X)$ consists of the support of a subfan of the {\em normal fan}
of the Minkowski sum of the Newton polytopes $\poly_i$ of the $f_i$
together with associated {\em multiplicities}. These multiplicities
are certain mixed volumes of faces of the $\poly_i$. The
tropicalization $\trop(X)$ is then projected to the space
$\RR^{n-k}_{\xx}$. This projection yields again a vertex oracle
for $\mmo(\poly_0,\dots,\poly_k)$.

While tropical elimination works for more general problems than
considered here (see e.g. Section 3 in \cite{rose2025}, i.e. also for
certain classes of varieties $X$ which are not complete
intersections), our algorithm, which does not use any tropical
geometry, offers two practical advantages in the considered setting:
First, the mixed volumes needed to compute $\trop(X)$ are computed
from mixed subdivisions in dimension $k+1$. Our algorithm necessitates
the computation of mixed subdivisions in dimension $k$, i.e. dimension
one less.  A more substantial improvement comes from the fact the we
avoid explicit computations with the Minkowski sum of the
$\poly_i$. To illustrate the resulting improvement, consider the
following example:
\begin{example}
  \label{ex:intro}
  In $\CC[x_0,x_1,x_2,x_3,y_1,y_2,y_3]$ we let
  \begin{align*}
    &f_0:=x_0-(y_1y_2 + y_3 + 1),f_1:=x_1-(y_1y_3+y_2+1),\\
    &f_2:=x_2-(y_1y_3 + y_1 + 1),f_3:=x_3-(y_1^3+y_2^5+y_3^7).
  \end{align*}
  This example appears on page 113 of \cite{sturmfels2008}.  Let $X$
  be the algebraic set defined by the $f_i$ and let $\poly_i$ be the
  Newton polytope of $f_i$ for each $i=0,\dots,3$. The Minkowski sum
  $\poly:=\poly_0+\dots +\poly_4$ has 651 faces of dimension $4$. The
  computation of $\trop(X)$ requires the computation of one mixed
  volume for each of these faces. Our algorithm extracts the mixed
  fiber polytope $\mmo(\poly_0,\dots,\poly_3)$ with just 38
  computations of mixed subdivisions.
\end{example}
Indeed, note that the Minkowski sum of $\poly_0,\dots,\poly_k$ can
have a number of faces in the worst case equal to the product of the
number of faces of the $\poly_i$, see e.g. \cite{grunbaum2003}.

We highlight these two practical advantages experimentally with a set
of examples, see \Cref{sec:exam}. These experiments are performed with
an implementation of our algorithm in the programming language
\texttt{julia} \cite{bezanson2017}, comparing it to
\texttt{TropicalImplicitization.jl} which is written in \texttt{julia}
as well. This latter package implements the computation of mixed fiber
polytopes based on tropical geometry for implicitization problems, as
in \cite{rose2025, sturmfels2008, sturmfels2007}. On the chosen
examples our software outperforms tropical implicitization by several
orders of magnitude.

\paragraph{Organization of the Paper.}

In \Cref{sec:found}, we introduce the necessary theory on mixed fiber
polytopes, based on the results in \cite{esterov2008}, as well as the
appropriate notion of subdivisions suitable for our work. We attempt
to be as self-contained as possible with respect to the necessary
polyhedral geometry and refer e.g. to \cite{ziegler1995} for a
comprehensive introduction to the subject. In \Cref{sec:main} we then
prove our main theoretical result and give the resulting algorithm
(\Cref{alg:vert}).  \Cref{sec:exam} gives more details on our
implementation of \Cref{alg:vert} as well as several examples on which
we ran our software. \Cref{sec:diffelim} discusses an application of
our method to differential algebra and gives further relevant
experimental data.

\paragraph*{Acknowledgements}
\label{par:ack}

The authors wish to thank Gleb Pogudin and Pierre Lairez for helpful
discussions and suggestions. RM was supported by the ERC project
``10000 DIGITS'' (ID 101040794). YM was supported by the French
ANR-22-CE48-0008 OCCAM and ANR-22-CE48-0016 NODE projects.

\section{Theoretical Foundations}
\label{sec:found}

As in the introduction, fix a sequence of $k+1$ Laurent polynomials
$F:=\set{f_0,\dots,f_k}\in \CC[\xx^{\pm},\yy^{\pm}]$ in two sets of
variables $\xx:=\set{x_1,\dots,x_{n-k}}$ and
$\yy:=\set{y_1,\dots,y_{k}}$. Assume $F$ defines a complete
intersection $X$ and that $f_i$ has Newton polytope
$\poly_i$. The Zariski closure of the projection of $X$ to
$\tor_{\xx}^{n-k}$ is a hypersurface $\Var(g)$, $g\in \CC[\xx^{\pm}]$. We
want to compute the Newton polytope of $g$ from the Newton polytopes
of the $f_i$.

The main result of \cite{esterov2008} is that the Newton polytope of
an appropriate choice of $g$ is contained in the \textit{mixed fiber
  polytope} of $\poly_0,\dots,\poly_k$. Given $\poly_0,\dots,\poly_k$,
this is the object we want to compute.

\subsection{Mixed Fiber Polytopes}
\label{sec:mfp}

For the remainder of this work, fix an integer $n\in \NN$ and let
$\prja : \RR^n\rightarrow \RR^{n-k}$ be the projection on the first
$n-k$ coordinates. Let $\prjb: \RR^n\rightarrow \RR^k$ be the projection on the
last $k$ coordinates.

For a set $S\subset \RR^n$, we denote by
$\affspan{S}$ the \textit{affine span} of $S$, i.e. the smallest
affine subspace of $\RR^n$ containing $S$. If $S$ is finite, then
we denote by $\conv{S}\subset \RR^n$
the \textit{convex hull} of $S$, i.e.
\[\conv{S} := \setdes{\sum_{s\in S} \alpha_ss}{\alpha_s\in [0,1], \sum_{s\in S}\alpha_s = 1}.\]
For us, a {\em polytope} is a subset of $\RR^n$ which can be written
as a convex hull of finitely many points. The {\em dimension} of a
polytope is the dimension of its affine span.

\begin{definition}[Minkowski Integral, Fiber Polytope]
  \label{def:fibpol}
  For a polytope $\poly\subset \RR^n$ define
  \begin{enumerate}
  \item the {\em Minkowski integral} as
    \[\mint{\poly} :=
      \setdes{\int_{p(\poly)}s(x)dx}{\substack{s\;:\;p(\poly)\rightarrow \poly\\
          \text{is a continous section of }\prjb}}\subset \RR^n.\]
  \item and the {\em fiber polytope} of $\poly$ as
    \[\mmo(\poly) := \pi\left(\mint{\poly}\right).\]
  \end{enumerate}
\end{definition}

Fiber polytopes were introduced in \cite{billera1992}. Our definition of
fiber polytopes follows \cite{esterov2008} and differs slightly from
the original one given in \cite{billera1992}: Here, the fiber polytope
is defined as
\[\frac{1}{\vol(\prjb(\poly))}\cdot \mint{\poly}.\]
In \cite{billera1992} it is in particular shown that the Minkowski
integral and fiber polytope of a polytope $\poly\subset \RR^n$ are polytopes
themselves.

\begin{example}
  \label{ex:fibpol}
  In \Cref{fig:fibpol} we attempted to sketch a fiber polytope
  $\mmo(\poly)$ in dimension $n=2$. Importantly, in this figure, the
  lower vertex $v_0$ of $\mmo(\poly)$ is the image under $\prja$ of
  the integral of the section corresponding to the lower three edges
  of $\poly$ (i.e. the ones on which the covector $1\in (\RR^1)^{*}$ is
  fiberwise minimized), while the upper vertex $v_1$ of $\mmo(\poly)$
  is the image under $\prja$ of the integral of the section
  corresponding to the upper three edges of $\poly$ (i.e. the ones on
  which the covector $-1\in (\RR^1)^{*}$ is fiberwise minimized). This
  observation was central in \cite{billera1992} and will be important
  for us too. We refer to Chapter 9 in \cite{ziegler1995} for further
  exposition on fiber polytopes and in particular to Theorem 9.6
  therein.
\end{example}

The basis of the definition of {\em mixed} fiber polytopes is the following
lemma.

\begin{lemma}[Theorem 10 in \cite{esterov2008}]
  \label{lem:prob}
  Fix variables $\lambda_0,\dots,\lambda_k$ and let
  $\poly_0,\dots,\poly_k\subset \RR^n$ be a collection of polytopes. Then
  $\mmo(\lambda_0\poly_0+\dots+\lambda_k\poly_k)$ is a homogeneous polynomial of
  degree $k+1$ with coefficients polytopes contained in $\RR^{n-k}$.
\end{lemma}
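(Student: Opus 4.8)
The plan is to pass everywhere to support functions: for a polytope $P$ and a covector $\gamma$ write $h_P(\gamma):=\min_{v\in P}\langle\gamma,v\rangle$, and recall that the assertion ``$\mmo(\lambda_0\poly_0+\dots+\lambda_k\poly_k)$ is a homogeneous polynomial of degree $k+1$ with polytope coefficients in $\RR^{n-k}$'' means exactly that for every $\gamma\in(\RR^{n-k})^{*}$ the number $h_{\mmo(\poly(\lambda))}(\gamma)$, where $\poly(\lambda):=\lambda_0\poly_0+\dots+\lambda_k\poly_k$, is a homogeneous polynomial of degree $k+1$ in $\lambda=(\lambda_0,\dots,\lambda_k)$ whose coefficients, viewed as functions of $\gamma$, are support functions of polytopes in $\RR^{n-k}$. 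First I would record the support-function formula for fiber polytopes (computed in \cite{billera1992}): for any polytope $\poly\subset\RR^n$,
\[
  h_{\mmo(\poly)}(\gamma)\;=\;\int_{\prjb(\poly)} h_{\prja(\poly\cap\prjb^{-1}(x))}(\gamma)\,dx .
\]
This follows from \Cref{def:fibpol}: since $\mmo(\poly)=\prja(\mint{\poly})$ we get $h_{\mmo(\poly)}(\gamma)=h_{\mint{\poly}}(\gamma\circ\prja)$; by linearity of the integral $h_{\mint{\poly}}(\gamma\circ\prja)$ is the infimum over continuous sections $s$ of $\int_{\prjb(\poly)}\langle\gamma\circ\prja,s(x)\rangle\,dx$, which by $L^1$-approximation of a measurable selection of the fibrewise $(\gamma\circ\prja)$-minimizing face equals $\int_{\prjb(\poly)}\min_{v\in\poly\cap\prjb^{-1}(x)}\langle\gamma\circ\prja,v\rangle\,dx$; and $\gamma\circ\prja$ ignores the last $k$ coordinates, which are constant on each fibre $\poly\cap\prjb^{-1}(x)$, so the fibrewise minimum is $h_{\prja(\poly\cap\prjb^{-1}(x))}(\gamma)$.

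Next, I would reinterpret the integrand geometrically. Fix $\gamma$ and let $L_\gamma\colon\RR^n\to\RR^{k+1}$, $v\mapsto(\langle\gamma,\prja(v)\rangle,\prjb(v))$, and for a polytope $\poly$ set $\widehat\poly:=L_\gamma(\poly)\subset\RR^{k+1}$. Then $x\mapsto h_{\prja(\poly\cap\prjb^{-1}(x))}(\gamma)$ is precisely the lower-envelope function of $\widehat\poly$ over its projection $\prjb(\poly)$, so $h_{\mmo(\poly)}(\gamma)=\int_{\prjb(\poly)}(\text{lower envelope of }\widehat\poly)\,dx$. Since $L_\gamma$ is linear, $\widehat{\poly(\lambda)}=\lambda_0\widehat\poly_0+\dots+\lambda_k\widehat\poly_k$ and $\prjb(\poly(\lambda))=\lambda_0\prjb(\poly_0)+\dots+\lambda_k\prjb(\poly_k)$. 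I would then decompose this integral over the lower facets of $\widehat{\poly(\lambda)}$: working in a chamber of $\RR_{\ge 0}^{k+1}$ on which the combinatorics is constant, each lower facet is $\sum_i\lambda_i\Phi_i$ for faces $\Phi_i$ of $\widehat\poly_i$, it projects to a cell $\sum_i\lambda_i\prjb(\Phi_i)$ of a coherent (in fact mixed) subdivision of $\prjb(\poly(\lambda))\subset\RR^k$, and on that cell the lower envelope is the affine function $x\mapsto\langle a,x\rangle+\sum_i\lambda_i b_i$ with slope $a$ and intercepts $b_i$ determined by $\widehat\poly_0,\dots,\widehat\poly_k$ (hence by $\gamma$) but independent of $\lambda$. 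The integral of this affine function over the cell is $\langle a,\int_{\sum_i\lambda_i\prjb(\Phi_i)}x\,dx\rangle+(\sum_i\lambda_i b_i)\,\vol_k(\sum_i\lambda_i\prjb(\Phi_i))$, and both $\lambda\mapsto\vol_k(\sum_i\lambda_iC_i)$ and $\lambda\mapsto\int_{\sum_i\lambda_iC_i}x\,dx$ are polynomial (of degrees $k$ and $k+1$) in $\lambda$ — the classical polynomiality of volumes and first moments of Minkowski-weighted sums of polytopes in $\RR^k$. Summing over the finitely many lower facets, $h_{\mmo(\poly(\lambda))}(\gamma)$ is a polynomial in $\lambda$ of degree at most $k+1$; homogeneity of degree exactly $k+1$ is then immediate from $\mmo(t\poly)=t^{k+1}\mmo(\poly)$, which follows from \Cref{def:fibpol} by the substitution $x\mapsto tu$ in the defining integral. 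Reading off the coefficient of each monomial $\lambda^{a}$ (e.g. by the standard multilinearization over subsets $I\subseteq\{0,\dots,k\}$) produces functions $\gamma\mapsto c_{a}(\gamma)$; from the explicit description above each $c_a$ is positively homogeneous, piecewise linear, and convex, hence a support function of a polytope in $\RR^{n-k}$, and these polytopes are the coefficients — the coefficient of $\lambda_0\lambda_1\cdots\lambda_k$ being the mixed fiber polytope.

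I expect the main obstacle to lie in two related points glossed over above. The first is upgrading ``piecewise polynomial in $\lambda$, continuous, and homogeneous of degree $k+1$'' to ``a single polynomial on all of $\RR_{\ge0}^{k+1}$'' (and hence, by polynomial extension, on $\RR^{k+1}$): across a wall between two chambers, two polynomials that merely agree on a hyperplane need not coincide, so — exactly as for mixed volumes — one should instead first establish the one-variable statement, that $\mu\mapsto h_{\mmo(\mu\poly_0+\poly')}(\gamma)$ is a polynomial in $\mu$ of degree at most $k+1$ for an arbitrary fixed polytope $\poly'$ (via a boundary/layer decomposition of the lower hull of $\mu\widehat\poly_0+\widehat\poly'$), and then iterate over the summands. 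The second is the convexity, and hence genuine polytope-ness, of the extracted coefficients $c_a$: this does not follow formally from subadditivity of $h_{\mmo(\poly(\lambda))}(\cdot)$ for fixed $\lambda$, and is best obtained by combining the cell-by-cell formula of the previous paragraph with the fact that the coherent mixed subdivision of $\prjb(\poly(\lambda))$ used there varies coherently with $\gamma$, so that the dependence of $a$, the $b_i$, and the cells on $\gamma$ is piecewise linear in a compatible way.
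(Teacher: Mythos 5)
The paper does not itself prove this lemma; it cites it verbatim as Theorem~10 in \cite{esterov2008} (a polynomiality statement also due independently to \cite{mcmullen2004}), so there is no in-paper argument to compare against, only your reconstruction. Your overall route --- pass to support functions, use the Billera--Sturmfels formula for $h_{\mmo(\poly)}$ as an integral of fibrewise minima, linearize via $L_\gamma$, and decompose over lower facets --- is the correct one. But the first of the two obstacles you flag is not real: the normal fan of $\lambda_0\widehat\poly_0+\cdots+\lambda_k\widehat\poly_k$ is the common refinement of the normal fans of the $\widehat\poly_i$, hence independent of $\lambda$ on the open positive orthant, so the index set of lower-facet tuples $(\Phi_0,\dots,\Phi_k)$ is a single fixed finite set for all $\lambda\in\RR_{>0}^{k+1}$. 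Your lower-facet decomposition therefore yields one honest polynomial on that orthant (which, by continuity of $\mmo$ in the Hausdorff metric, agrees with $\mmo(\poly(\lambda))$ on the closed orthant and extends uniquely); the detour through one-variable increments $\mu\mapsto\mmo(\mu\poly_0+\poly')$ is solving a wall-crossing problem that does not arise.

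The second obstacle you name is genuine and your sketch leaves it open: showing that each coefficient $c_a(\gamma)$ is the support function of a polytope, i.e.\ concave in $\gamma$ (with the paper's $\min$-convention), not merely positively homogeneous and piecewise linear. This does not follow by polarization from concavity of $h_{\mmo(\poly(\lambda))}(\cdot)$ for each fixed $\lambda>0$: already in degree $2$, taking $c_{20}=c_{02}=-\abs{\gamma}$ and $c_{11}=2\abs{\gamma}$ gives $\lambda_0^2c_{20}+\lambda_0\lambda_1c_{11}+\lambda_1^2c_{02}=-(\lambda_0-\lambda_1)^2\abs{\gamma}$, which is concave in $\gamma$ for every $\lambda>0$ even though $c_{11}$ is convex. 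Your suggestion to exploit the coherent $\gamma$-dependence of the lower-hull cell data is the right instinct, but as written it is an intention rather than a proof, and it is precisely where the substance of Esterov's Theorem~10 (or McMullen's treatment) lies; a complete argument would have to supply it.
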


\begin{wrapfigure}{l}{0.6\textwidth}
  \centering
  \begin{tikzpicture}[thick,fill opacity=0.9,scale=0.6,every node/.style={scale=0.8}]
    
    % polytope
    \draw[line width=2pt][-, black] (0,1) -- (1,0) -- (5,0) -- (6,1) -- (6,2) -- (5,3) -- (1,3) -- (0,2) -- cycle; 
    \draw (5.8,2.8) node[]{\Large $\poly$};

    % projection of polytope
    \draw[line width=2pt][-, black] (0,-2) -- (6,-2);
    \draw (6,-2) node[above]{\Large $\prjb(\poly)$};

    % OX
    \draw[->, black] (-1,-2) -- (7.5,-2);
    \draw (7.5,-2) node[right]{\Large $\RR^{k}$};
    
    % OY
    \draw[->, black] (-1,-2) -- (-1,5);
    \draw (-1,5) node[above]{\Large $\RR^{n-k}$};

    % Fiber polytope
    \draw[line width=2pt][-, col0] (-1, 0.3) -- (-1, 2.7);
    \draw (-1,1.5) node[left]{\Large $\mmo(\poly)$};

    \node[circle, draw=black, fill=black, inner sep=0pt,minimum size=0.4em] at (-1,0.3) { };
    \draw (-1,0.3) node[left]{\Large $v_0$};
    \draw (-1,2.7) node[left]{\Large $v_{1}$};
    \node[circle, draw=black, fill=black, inner sep=0pt,minimum size=0.4em] at (-1,2.7) { };

    % p
    \draw[-latex, black] (7,2.5) -- (7,0.5);
    \draw (7,1.5) node[right]{\Large $\prjb$};

    % pi
    \draw[-latex, black] (4,4) -- (2,4);
    \draw (3,4) node[above]{\Large $\prja$};

  \end{tikzpicture}
  \caption{A Fiber Polytope $\mmo(\poly)$.}
  \label{fig:fibpol}
\end{wrapfigure}

The summation in the polynomial
$\mmo(\lambda_0\poly_0+\dots+\lambda_k\poly_k)$ introduced in \Cref{lem:prob} is
understood as Minkowski summation.

\begin{remark}
  In order for \Cref{lem:prob} to be true, we have to consider $0$ as a
  homogeneous polynomial of degree $k+1$ as well. If
  \[\dim \prjb(\affspan{\poly_0+ \dots + \poly_k}) < k,\]
  then
  \[\mmo(\lambda_0\poly_0+\dots+\lambda_k\poly_k) = \set{0}.\]
  Indeed, for all $\lambda_0,\dots,\lambda_k>0$, since
  $\prjb(\poly_0+\dots + \poly_k)$ is a set of Lebesgue measure zero,
  the Minkowski integral of $\lambda_0\poly_0+\dots+\lambda_k\poly_k$ is the
  singleton $\set{0}$.
\end{remark}

\Cref{lem:prob} yields the definition of our objects of interest:

\begin{definition}[Mixed Fiber Polytope]
  \label{def:mfp}
  For a collection of polytopes $\poly_0,\dots,\poly_k\subset \RR^n$, the
  \textit{mixed fiber polytope} $\mmo(\poly_0,\dots,\poly_k)$ is the
  coefficient of the monomial $\lambda_0\dots\lambda_k$ in the polynomial
  $\mmo(\lambda_0\poly_0+\dots+\lambda_k\poly_k)$.
\end{definition}

\begin{remark}
  Similarly to the well known notion of mixed volumes, the mixed fiber
  polytope is symmetric and multilinear in its arguments. This follows
  from the fact that it is the coefficient of
  $\lambda_0\dots \lambda_k$ in the polynomial
  $\mmo(\lambda_0\poly_0+\dots+\lambda_k\poly_k)$ \cite[Theorem 8
  in][]{esterov2008}. Note that this also implies
  \[(k+1)!\mmo(\poly) = \mmo(\poly,\dots,\poly).\]
\end{remark}

Our main interest in mixed fiber polytopes lies in the following
theorem:

\begin{theorem}[Definition 5, Theorem 2 and Theorem 7 in \cite{esterov2008}]
  \label{thm:mfpelim}
  Let \[F:=\set{f_0,\dots,f_k}\subset \CC[\xx^{\pm},\yy^{\pm}]\] be Laurent
  polynomials with Newton polytopes $\poly_0,\dots,\poly_k$. Let
  $X := \Var(F)\subset \tor^n$ and let
  $\prja : \tor^n\rightarrow \tor^{n-k}_{\xx}$ be the projection to the torus
  corresponding to the variables in $\xx$. Suppose that $F$ defines a
  complete intersection. Then there exists $g\in \CC[\yy^{\pm}]$ such that
  \[\overline{\prja(X)} = \Var(g),\]
  and such that the Newton polytope of $g$ is contained in
  $\mmo(\poly_0,\dots,\poly_k)$. If the coefficients of $F$ are
  sufficiently generic, then the Newton polytope of $g$ is equal to
  $\mmo(\poly_0,\dots,\poly_k)$.
\end{theorem}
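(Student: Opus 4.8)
The plan is to split the claim into three parts---existence of the eliminant $g$, the inclusion $\operatorname{Newt}(g)\subseteq\mmo(\poly_0,\dots,\poly_k)$, and equality for generic coefficients---and treat them in turn. For existence, the structural point is that the number of equations, $k+1$, is exactly one more than $k=\dim\tor^k_\yy$, so eliminating $\yy$ is a resultant problem. Reading each $f_i$ as a Laurent polynomial in $\yy$ whose coefficients $c_{i,b}$ lie in $\CC[\xx^{\pm}]$ (its $\yy$-Newton polytope is then $\prjb(\poly_i)$, and $\operatorname{supp}c_{i,b}\subseteq\prja(\poly_i\cap\prjb^{-1}(b))$), I would take $g$ to be the sparse $(\prjb(\poly_0),\dots,\prjb(\poly_k))$-resultant evaluated at these $\xx$-dependent coefficients. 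Since $X$ is a complete intersection it has pure dimension $n-k-1$, and one checks that $\dim\prjb(\affspan{\poly_0+\dots+\poly_k})=k$, so that $\prja|_X$ is generically finite onto a hypersurface; the specialized resultant then vanishes at $\xx_0$ exactly when $f_0(\xx_0,\cdot),\dots,f_k(\xx_0,\cdot)$ have a common root, i.e.\ when $\xx_0\in\prja(X)$, so that $\overline{\prja(X)}=\Var(g)$ (after passing to the relevant irreducible factor if necessary).

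For the inclusion I would expand the resultant as a polynomial $\sum_m\gamma_m\prod_{i,b}c_{i,b}^{m_{i,b}}$ in its coefficients, with $m$ ranging over the lattice points of the resultant polytope $\poly_R$, and then substitute back $c_{i,b}=c_{i,b}(\xx)$. Since the Newton polytope of a sum is contained in the convex hull of the Newton polytopes of its terms, this already gives $\operatorname{Newt}(g)\subseteq\conv{\bigcup_m\sum_{i,b}m_{i,b}\operatorname{Newt}(c_{i,b}(\xx))}$, and the heart of the matter is to identify this last polytope with $\mmo(\poly_0,\dots,\poly_k)$. I would do this vertex by vertex: fix a generic covector $\gamma\in(\RR^{n-k})^{*}$ and compare the $\gamma$-minimal vertices of the two polytopes. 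On the fiber side, multilinearity (Remark after \Cref{def:mfp}) together with the polarization identity $(k+1)!\,\mmo(\poly)=\mmo(\poly,\dots,\poly)$ reduces the task to a single fiber polytope, where, by the Billera--Sturmfels description (\Cref{ex:fibpol} and Theorem 9.6 in \cite{ziegler1995}), the $\gamma$-minimal vertex is the $\prja$-image of the integral of the section of $\prjb$ that over each point of $\prjb(\poly)$ selects the face of $\poly:=\poly_0+\dots+\poly_k$ on which a suitable lift of $\gamma$ is minimized---a section that is affine over the cells of a coherent subdivision of the $\prjb(\poly_i)$ indexed by certain $k$-dimensional faces of $\poly$. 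On the other side, the $\gamma$-minimal vertex of $\conv{\bigcup_m\sum_{i,b}m_{i,b}\operatorname{Newt}(c_{i,b}(\xx))}$ is controlled by the faces of $\poly_R$, which---via the Gelfand--Kapranov--Zelevinsky-type description of $\poly_R$ (cf.\ \cite{gelfand1994})---are indexed by the resultants of the very same facial subsystems, so reading off Newton polytopes reassembles the same sum of points. This identification is essentially the content of Theorem 12 in \cite{esterov2008}; running it over all generic $\gamma$ gives the inclusion, with equality precisely when none of the facial resultants---equivalently, none of the expected extreme coefficients of $g$---vanishes.

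Equality for generic $F$ then follows by a routine non-vanishing argument: each coefficient of $g$ indexed by a lattice point of $\mmo(\poly_0,\dots,\poly_k)$ is a fixed polynomial in the finitely many coefficients of $F$, and one checks that it does not vanish identically---e.g.\ by exhibiting a single $F$ for which it survives, or by degenerating so that $g$ becomes a product of lower-dimensional resultants---so that all of these coefficients are simultaneously nonzero on a Zariski-dense open subset of coefficient space, on which $\operatorname{Newt}(g)=\mmo(\poly_0,\dots,\poly_k)$ up to the translation ambiguity inherent in choosing $g$ as a Laurent polynomial. The step I expect to be the main obstacle is the vertex identification above: showing that the combinatorial data controlling the $\gamma$-leading forms of the specialized resultant genuinely assemble into the ``integral of sections'' formula for the vertices of the \emph{mixed} fiber polytope, together with the bookkeeping needed to pass between the mixed and the non-mixed versions. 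A useful sanity check---and an independent route to the equality, given the inclusion---is the following width identity: for every primitive $w\in\ZZ^{n-k}$, the substitution $x_j\mapsto t^{w_j}$ turns $F$ into a square system in $(t,\yy)\in\tor^{k+1}$ whose BKK number equals, after a short multilinearity computation, the lattice width of $\mmo(\poly_0,\dots,\poly_k)$ in direction $w$; this width coincides with that of $\operatorname{Newt}(g)$ when $F$ is generic and bounds it from above in general, so that, together with the inclusion, it forces $\operatorname{Newt}(g)=\mmo(\poly_0,\dots,\poly_k)$.
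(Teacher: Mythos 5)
Note first that the paper does not prove \Cref{thm:mfpelim}: it is quoted verbatim from \cite{esterov2008} (the bracketed reference is a citation, not a proof), so there is no in-paper proof to compare against. Your sketch is therefore an attempt to re-derive the result, and it takes a genuinely different route: Esterov and Khovanskii prove it directly by developing a theory of mixed fiber \emph{bodies} and of the composite polynomial of $F$, with no reference to sparse resultants, whereas your argument routes everything through the sparse resultant and the structure of its Newton polytope $\poly_R$ --- the Gelfand--Kapranov--Zelevinsky/Sturmfels line that the paper itself lists only under related work (e.g.\ \cite{emiris2007,emiris2010}). That is a legitimate alternative, and if one is willing to take the structure theory of $\poly_R$ as given it is arguably closer to the algorithmic content of the paper.

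Two caveats. First, the step you flag as the ``main obstacle'' --- matching the $\gamma$-face of $\conv{\bigcup_m\sum_{i,b}m_{i,b}\operatorname{Newt}(c_{i,b})}$ against the integral-of-sections description of the $\gamma$-vertex of $\mmo(\poly_0,\dots,\poly_k)$ --- is not a bookkeeping detail: it is precisely where the theorem lives, and it is essentially the content of \Cref{thm:mfpvert} combined with \Cref{thm:main} (restricted to the trivial $\prjb$-coherent subdivision). As written your sketch asserts that the identification holds and gestures at the GKZ description of faces of $\poly_R$, but contains no argument; until that gap is filled this is a plan, not a proof. Second, the width identity at the end quietly assumes that the BKK count of the substituted system in $(t,\yy)$ equals the $t$-degree of $g(t^{w})$; in fact it equals that degree times the generic cardinality of the fiber of $\prja|_X$ over $\overline{\prja(X)}$, which need not be one. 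Esterov and Khovanskii absorb this factor into the definition of the \emph{composite polynomial} (cf.\ the Remark following \Cref{thm:mfpelim}); your existence step should pin $g$ down as that composite polynomial, rather than ``the specialized resultant, passing to an irreducible factor if necessary'', for the width argument to close the equality.

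A minor remark on the statement itself: as printed it has $g\in\CC[\yy^{\pm}]$, but $\overline{\prja(X)}$ lives in $\tor^{n-k}_{\xx}$, so $g$ should be a Laurent polynomial in $\xx$; the introduction of the paper has it correctly, and your proposal implicitly uses the correct reading.
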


\begin{remark}
  The polynomial $g$ in \Cref{thm:mfpelim} may be exactly
  characterized as the {\em composite polynomial} of $F$, see
  Definition 2 in \cite{esterov2008}.
\end{remark}

\subsection{Vertices of Mixed Fiber Polytopes}
\label{sec:mfpvert}

Let $\poly_0,\dots,\poly_k\subset \RR^n$ be polytopes. We will compute the
mixed fiber polytope $\mmo(\poly_0,\dots,\poly_k)$ by providing a
\textit{vertex oracle}, i.e. a function that returns for a generic
covector $\gamma \in (\RR^{n-k})^{*}$ the unique point of
$\mmo(\poly_0,\dots,\poly_k)$ on which $\gamma$ is minimized. Note that any
polytope is the convex hull of its vertices. Let us first introduce
some associated notation and recall some standard definitions:

For a compact subset $S\subset \RR^n$ and covector
$\gamma\in (\RR^n)^{*}$ we denote by $S^{\gamma}$ the set of points in
$S$ where $\gamma$ is minimized.

\begin{definition}[Face, Vertex, Normal Cone]
 \label{def:face}
 A \textit{face} of a polytope $\poly\subset \RR^n$ is a set of the form
 $\poly^{\gamma}$ for some $\gamma\in (\RR^n)^{*}$. Every face of a polytope is 
 a polytope itself.

 The \textit{vertices} of $\poly$ are the $0$-dimensional
 faces, they consist of singletons.

 The \textit{normal cone}
 $\nc(\fc,\poly)$ of a face $\fc$ of $\poly$ is defined as
 \[\nc(\fc,\poly):=\setdes{\gamma\in (\RR^n)^{*}}{\fc \subseteq \poly^{\gamma}},\]
\end{definition}

Any normal cone is a {\em convex polyhedral cone}, i.e. it can be
written as the set of all linear combinations with non-negative
coefficients of a finite set of vectors. Recall also that, for a
polytope $\poly\subset \RR^n$ and a face $\fc$ of $\poly$, we have
$\dim (\nc(\fc,\poly)) = n - \dim(\fc)$. Here, the dimension of a
convex polyhedral cone is again defined to be the dimension of its
affine span. This implies in particular that $\poly^{\gamma}$ is a vertex
for all $\gamma\in (\RR^n)^{*}$ lying outside a union of finitely many
subspaces of $(\RR^n)^{*}$.

Recall further that, for a collection of polytopes
$\poly_0,\dots,\poly_k\subset \RR^n$, the faces of the Minkowski sum
$\poly:=\poly_0+\dots+\poly_k$ are sums of faces of the $\poly_i$: For
a covector $\gamma\in (\RR^n)^{*}$ we have
\[\poly^{\gamma} = \poly_0^{\gamma}+ \dots + \poly_k^{\gamma}.\]
Thus, by a face of a tuple of polytopes
$\poly_{\bullet}:=(\poly_0,\dots,\poly_k)$ we mean a tuple of the form
$\fc_{\bullet}:=(\poly_0^{\gamma},\dots,\poly_k^{\gamma})$ for a covector
$\gamma\in (\RR^n)^{*}$. By the dimension of $\fc_{\bullet}$ we mean the dimension of
the corresponding face of $\poly$ and by the normal cone
$N(\fc_{\bullet},\poly_{\bullet})$ the normal cone of the corresponding face of
$\poly$. For some integer $d\in \NN$ we denote by
$\mathcal{F}_d(\poly_{\bullet})$ the set of $d$-dimensional faces of $\poly_{\bullet}$.

The basis of our approach is an appropriate mixed version of the
phenomenon that we illustrated in \Cref{ex:fibpol} which reads as
follows:

\begin{theorem}
  \label{thm:mfpvert}
  Let $\poly_{\bullet}:=(\poly_0,\dots,\poly_k)$ be a tuple of polytopes in
  $\RR^n$ and let $\poly:=\poly_0+\dots+\poly_k$.  For a covector
  $\gamma\in (\RR^{n-k})^{*}$ let
  \[\mathcal{F}_{k,\gamma} := \setdes{\fc_{\bullet}\in \mathcal{F}_k(\poly_{\bullet})}{\forall x\in \sum_{\fc_i\in \fc_{\bullet}}\fc_i : x = \argmin_{\gamma}\prjb^{-1}(\prjb(x))},\]
  this is the set of $k$-dimensional faces of $\poly_{\bullet}$ on which
  $\gamma$ is fiberwise minimized.

  If $\gamma$ is sufficiently generic, i.e. lies outside a union of
  finitely many linear subspaces of $(\RR^{n-k})^{*}$, then
  $\mmo(\poly_{\bullet})^{\gamma}$ is a vertex of
  $\mmo(\poly_{\bullet})$ which is given by
  \[\mmo(\poly_{\bullet})^{\gamma} = \sum_{\fc_{\bullet}\in \mathcal{F}_{k,\gamma}}
    \mmo(\fc_{\bullet}).\]
\end{theorem}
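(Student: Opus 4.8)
The plan is to reduce the statement to the non-mixed case (\Cref{ex:fibpol}, i.e.\ Theorem~9.6 in \cite{ziegler1995}) applied to the scaled Minkowski sum $\lambda_0\poly_0+\dots+\lambda_k\poly_k$, and then extract the coefficient of $\lambda_0\cdots\lambda_k$. First I would recall from \Cref{lem:prob} and \Cref{def:mfp} that for fixed positive $\lambda_i$ the fiber polytope $\mmo(\lambda_\bullet\poly_\bullet):=\mmo(\lambda_0\poly_0+\dots+\lambda_k\poly_k)$ is a genuine fiber polytope of the single polytope $\poly_\lambda:=\lambda_0\poly_0+\dots+\lambda_k\poly_k$, and that by the theory of fiber polytopes the vertex $\mmo(\poly_\lambda)^\gamma$, for generic $\gamma\in(\RR^{n-k})^*$, equals $\prja$ of the Minkowski integral of the section of $\prjb:\poly_\lambda\to\prjb(\poly_\lambda)$ that is fiberwise $\gamma$-minimal. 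This fiberwise-minimal section is supported on the union of those faces of $\poly_\lambda$ that are $k$-dimensional and fiberwise $\gamma$-minimal — precisely (since faces of a Minkowski sum are sums of faces of the summands, and this correspondence is independent of the positive scalars $\lambda_i$) the faces $\sum_i \lambda_i\fc_i$ with $\fc_\bullet\in\mathcal F_{k,\gamma}(\poly_\bullet)$, where the indexing set $\mathcal F_{k,\gamma}$ does \emph{not} depend on $\lambda$. I would need here a short argument that genericity of $\gamma$ can be arranged uniformly: the bad set for $\gamma$ is where two distinct $k$-faces project to overlapping full-dimensional images or where some fiberwise-minimal face has dimension $<k$; each such event is a nontrivial linear condition on $\gamma$ whose defining subspace is independent of the (positive) $\lambda_i$, so a single finite union of subspaces works for all $\lambda$ in the positive orthant.

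Next, on the fiberwise-minimal locus the section decomposes cellwise: over the (relative) interior of $\prjb(\sum_i\lambda_i\fc_i)$ the minimal section is the restriction of the affine section picking out that $k$-face. By additivity of the Minkowski integral over a subdivision of the base $\prjb(\poly_\lambda)$ into the pieces $\prjb(\sum_i\lambda_i\fc_i)$ — this subdivision is coherent, induced by $\gamma$, as noted in the introduction — the Minkowski integral of the global minimal section is the Minkowski sum, over $\fc_\bullet\in\mathcal F_{k,\gamma}$, of the Minkowski integrals of the local sections, and hence
\[
\mmo(\lambda_\bullet\poly_\bullet)^\gamma \;=\; \sum_{\fc_\bullet\in\mathcal F_{k,\gamma}} \mmo\bigl(\lambda_0\fc_0+\dots+\lambda_k\fc_k\bigr)^{\gamma}.
\]
Here I must be slightly careful that the boundary overlaps between adjacent cells have Lebesgue measure zero in $\prjb(\poly_\lambda)$ and so do not contribute to the integral — this is exactly the genericity of $\gamma$ ensuring the cells meet only in measure-zero sets, together with $\dim\prjb(\affspan{\poly_\lambda})=k$ whenever the left-hand side is a nonzero polytope (the degenerate case being handled by the \emph{Remark} after \Cref{lem:prob}).

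Finally I would extract the coefficient of $\lambda_0\cdots\lambda_k$ from both sides. The left side gives $\mmo(\poly_\bullet)^\gamma$: since $\mmo(\lambda_\bullet\poly_\bullet)$ is a homogeneous degree-$(k+1)$ polynomial in the $\lambda_i$ whose vertex in direction $\gamma$ is, by the above, a \emph{single} monomial-wise-linear expression — more precisely, for generic $\gamma$ the map $\lambda\mapsto\mmo(\lambda_\bullet\poly_\bullet)^\gamma$ is the same polynomial identity evaluated at a vertex, so taking the $\lambda_0\cdots\lambda_k$-coefficient commutes with taking the $\gamma$-face — we get $\mmo(\poly_\bullet)^\gamma$ on the left; that this is a \emph{vertex} of $\mmo(\poly_\bullet)$ follows because $\gamma$ is generic and $\mmo(\poly_\bullet)\subset\RR^{n-k}$ is a polytope. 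On the right side, for each $\fc_\bullet\in\mathcal F_{k,\gamma}$ the polytope $\sum_i\lambda_i\fc_i$ has $\prjb$-image of dimension $\le k$; if the dimension is exactly $k$ then $\mmo(\lambda_0\fc_0+\dots+\lambda_k\fc_k)$ is a single point (the fiber polytope of a polytope that surjects onto a base of the same dimension, so its Minkowski integral — and hence fiber polytope — is the single point $\prja$ of the integral of the unique section), so extracting the $\lambda_0\cdots\lambda_k$-coefficient yields exactly $\mmo(\fc_\bullet)$; if the dimension drops below $k$, both the monomial coefficient and $\mmo(\fc_\bullet)$ are $\{0\}$ by the Remark, so the terms agree. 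Summing gives $\mmo(\poly_\bullet)^\gamma=\sum_{\fc_\bullet\in\mathcal F_{k,\gamma}}\mmo(\fc_\bullet)$, as claimed.

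The main obstacle, I expect, is making the ``coefficient extraction commutes with taking the $\gamma$-face'' step fully rigorous: $\mmo(\lambda_\bullet\poly_\bullet)$ is a polynomial with \emph{polytope} coefficients, and one must argue that for generic $\gamma$ the vertex-selection operator $S\mapsto S^\gamma$ is additive and positively homogeneous on the relevant cone of polytopes (which it is, once $\gamma$ is generic enough that it selects a vertex on every polytope in sight and is compatible with Minkowski sums), so that it distributes over the polynomial expansion. Handling this uniformly in $\lambda$ over the positive orthant, and cleanly treating the lower-dimensional degenerate faces, is where the care is needed; the rest is the fiber-polytope additivity over a coherent subdivision of the base, which is standard.
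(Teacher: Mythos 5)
Your proof is correct, but it takes a genuinely different route from the paper. The paper's entire proof is a one-line reformulation, namely
\[\mathcal{F}_{k,\gamma} =\setdes{\fc\in \mathcal{F}_k(\poly_{\bullet})}{\gamma\in\prja(\nc(\fc,\poly))},\]
followed immediately by an appeal to Theorem~12 of \cite{esterov2008}; the paper proves nothing beyond matching notation. You instead give a self-contained derivation from the Billera--Sturmfels description of fiber-polytope vertices: the $\gamma$-vertex of $\mmo(\poly_\lambda)$ is $\prja$ of the integral of the fiberwise $\gamma$-minimal section, that section splits over the $\prjb$-coherent subdivision into the faces in $\mathcal{F}_{k,\gamma}$, and extracting the $\lambda_0\cdots\lambda_k$-coefficient gives the claim. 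This is almost certainly the substance of Esterov's Theorem~12, so what you gain is an actual proof where the paper only cites. The two subtleties you flag are exactly the right ones, and your plans for them are adequate: uniform genericity of $\gamma$ over the positive $\lambda$-orthant holds because the normal fan of $\lambda_0\poly_0+\dots+\lambda_k\poly_k$ is independent of $\lambda$ on that orthant, and coefficient extraction commutes with $S\mapsto S^\gamma$ because $(\sum_i c_iQ_i)^\gamma=\sum_i c_iQ_i^\gamma$ for $c_i>0$ once $\gamma$ simultaneously selects a vertex on every coefficient polytope, after which equality of the vector-valued polynomials on the positive orthant forces equality of coefficients. One small simplification: for $\fc_{\bullet}\in\mathcal{F}_{k,\gamma}$, fiberwise $\gamma$-minimality forces $\prjb$ to be injective on $\sum_i\fc_i$, so $\dim\prjb\left(\sum_i\fc_i\right)=k$ automatically and the degenerate sub-case you treat at the end cannot occur.
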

\begin{proof}
  Note that
  \[\mathcal{F}_{k,\gamma} =\setdes{\fc\in \mathcal{F}_k}{\gamma\in\prja(\nc(\fc,\poly))}.\]
  Now the stated claim is just Theorem 12 in \cite{esterov2008}.
\end{proof}

In the setting of \Cref{thm:mfpvert}, $\prjb$ induces a bijection from
$\sum_i\fc_i$ to $\prjb\left( \sum_i\fc_i\right)$ for any
$\fc_{\bullet}\in \mathcal{F}_{k,\gamma}$.  Hence
$\mmo(\fc_{\bullet})$ consists of a single point since the Minkowski
integral of $\sum_i\fc_i$ consists of a single point.

\subsection{Subdivisions of Point Configurations}
\label{sec:msubdiv}

A {\em subdivision} of a polytope $\poly$ is, roughly speaking, a set
of polytopes of the same dimension as $\poly$ which cover $\poly$ and
only intersect in faces of themselves.

If $\poly_0,\dots,\poly_k\subset \RR^n$ are polytopes and
$\poly:=\poly_0+\dots+\poly_k$, then we will apply \Cref{thm:mfpvert}
by computing {\em mixed subdivisions} of the polytope
$\prjb(\poly)$. Mixed subdivisions were introduced in \cite{huber1995}
to compute mixed volumes of polytopes. In the context of
\Cref{thm:mfpvert}, the images under $\prjb$ of the $k$-dimensional
faces of $\poly_{\bullet}$ on which a given covector
$\gamma\in (\RR^{n-k})^{*}$ is fiberwise minimized introduce another
subdivision of $\prjb(\poly)$, a {\em coherent} subdivision.

Let us now introduce the appropriate notion of subdivisions.  In
practice, the polytopes $\poly_0,\dots,\poly_k$ for which we want to
compute the mixed fiber polytope are given to us via supports of
polynomials of which they are the convex hull.  It is now more
convenient to work directly with the underlying points in the
supports.

\begin{definition}[Point Configuration, Cell]
  \label{def:ptconf}
  A {\em point configuration} in $\ZZ^n$ is a tuple of finite
  subsets $\mathcal{A}:=(\mathcal{A}_0,\dots,\mathcal{A}_k)$ of
  $\ZZ^{n}$. A {\em cell} of $\mathcal{A}$ is a tuple
  $C:=(C_0,\dots,C_k)$ of sets $C_i\subset \mathcal{A}_i$. Denote
  \begin{align*}
    &\type(C) := (\dim(\conv{C_0}),\dots,\dim(\conv{C_k})),\\
    &\conv{C} := \conv{C_0 + \dots + C_k}.
  \end{align*}
  A \textit{face} of a cell $C$ is a cell of the form
  $C^{\gamma}:=(C_0^{\gamma},\dots,C_k^{\gamma})$ for some $\gamma\in (\RR^n)^{*}$.
\end{definition}

\begin{definition}[Subdivision]
  \label{def:subdiv}
  Let $\mathcal{A}:=(\mathcal{A}_0,\dots,\mathcal{A}_k)$ be a point configuration in
  $\ZZ^{n}$. Let
  $d := \dim(\mathcal{A}) := \dim \affspan{\mathcal{A}_0+\dots+ \mathcal{A}_k}$. A {\em subdivision} $S$
  of $\mathcal{A}$ is a finite collection of cells of $\mathcal{A}$ s.t.
  \begin{enumerate}
  \item $\dim(\conv{C}) = d$ for all $C\in S$.
  \item $\conv{C_1}\cap \conv{C_2}$ is a face of both $C_1$ and $C_2$ for
    all $C_1,C_2\in S$.
  \item $\bigcup_{C\in S} \conv{C} = \conv{\mathcal{A}}$.
  \end{enumerate}
\end{definition}

Let $S$ be a subdivision of a point configuration $\mathcal{A}$. Each cell
$C\in S$ defines a polytope
$\conv{C}\subset \poly:=\conv{\mathcal{A}}$ of the same dimension as
$\poly$.  The polytopes $\conv{C}$, $C\in S$, divide $\poly$ into
smaller polytopes intersecting only in faces of themselves.

Let $\mathcal{A}:=(\mathcal{A}_0,\dots,\mathcal{A}_k)$ be a point configuration in
$\ZZ^n$ and let $\poly_i:=\conv{\mathcal{A}_i}$. Through \Cref{thm:mfpvert},
each vertex of the mixed fiber polytope $\mmo(\poly_0,\dots,\poly_k)$
will be determined by a {\em coherent} subdivision of the
point configuration
$\prjb(\mathcal{A}):=(\prjb(\mathcal{A}_0),\dots,\prjb(\mathcal{A}_k))$ which we now define.

\begin{definition}[Weight Vector, Lifted Configuration, Lower Facet]
  Let $\mathcal{A} = (\mathcal{A}_0,\dots,\mathcal{A}_k) \subset \ZZ^n$ be a point configuration.
  \begin{enumerate}
  \item A {\em weight vector for $\mathcal{A}$} is a vector
    \[w\in \RR^{\abs{\mathcal{A}_0}}\times \dots \times \RR^{\abs{\mathcal{A}_k}}\] indexed by the
    points in each $\mathcal{A}_i$.
  \item For a weight vector $w$ we define the {\em lifted configuration}
    $\mathcal{A}_w:=(\mathcal{A}_{0,w},\dots, \mathcal{A}_{k,w})$ of $\mathcal{A}$ as consisting of the sets
    \[\mathcal{A}_{i,w}:=\setdes{\twov{a}{w_a}}{a\in \mathcal{A}_i}\subset \RR^{n+1}\]
    for each $i=0,\dots,k$. Here, $w_a$ is the entry of $w$ corresponding
    to the point $a\in \mathcal{A}_i$.
  \item A {\em lower facet} $\fc$ of $\mathcal{A}_w$ is a face
    of dimension $\dim(\mathcal{A})$ of $\mathcal{A}_w$ such that any
    $\gamma\in (\RR^{n+1})^{*}$ with $\mathcal{A}_w^{\gamma} = \fc$ has positive last
    coordinate.
  \end{enumerate}
\end{definition}

\begin{definition}[Coherent Subdivision]
  \label{def:cohsub}
  Let $\mathcal{A}$ be a point configuration in $\ZZ^n$ and $w$ be a weight
  vector for $\mathcal{A}$. The {\em coherent subdivision} $S_w$ of
  $\mathcal{A}$ is the subdivision of $\mathcal{A}$ consisting of the images of all lower
  facets of the lifted configuration $\mathcal{A}_w$ under the projection
  forgetting the last coordinate.
\end{definition}

Coherent subdivisions are indeed subdivisions in the sense of
\Cref{def:ptconf}, see Lemma and Definition 2.6 in \cite{huber1995}.

 {\em Mixed subdivisions} of a point configuration
$\mathcal{A}$ in $\ZZ^n$ have been defined in \cite{huber1995} to compute mixed
volumes.

\begin{definition}[(Fine) Mixed Subdivision]
  \label{def:mixdiv}
  Let $S$ be a subdivision of a point configuration $\mathcal{A}$ of $\ZZ^n$.
  \begin{enumerate}
  \item The subdivision $S$ is called {\em mixed} if for all
    $C := (C_0,\dots,C_k)\in S$ we have
    \[\sum_j \dim(\conv{C_{j}}) = \dim(\mathcal{A}).\]
  \item The subdivision $S$ is called {\em fine mixed} if in addition,
    for all $C\in S$ we have
    \[\sum_j( \abs{C_j} - 1) = \dim(\mathcal{A}).\]
  \end{enumerate}
  Here, $\abs{C_j}$ denotes the cardinality of $C_j$.
\end{definition}

\begin{wrapfigure}{r}{0.6\textwidth}
  \centering
  \begin{tikzpicture}[thick,fill opacity=0.9,scale=0.6,every node/.style={scale=0.8}]
    \draw[line width=2pt][-, col0] (0,0) -- (5,0);
    \draw[line width=2pt][-, col2] (5,0) -- (8,0);

    \draw[dashed] (0,2) -- (0,0);
    \draw[dashed] (5,2.5) -- (5,0);
    \draw[dashed] (8,6) -- (8,0);

    \fill[gray,opacity=0.2] (0,2) -- (8,6) -- (5,2.5) -- cycle;

    \node[circle, draw=black, fill=black, inner sep=0pt,minimum size=0.4em] at (0,0) { };
    \draw (0,0) node[below]{$0+0$};
    \node[circle, draw=black, fill=black, inner sep=0pt,minimum size=0.4em] at (3,0) { };
    \draw (3,0) node[below]{$a+0$};
    \node[circle, draw=black, fill=black, inner sep=0pt,minimum size=0.4em] at (5,0) { };
    \draw (5,0) node[below]{$0+b$};
    \node[circle, draw=black, fill=black, inner sep=0pt,minimum size=0.4em] at (8,0) { };
    \draw (8,0) node[below]{$a+b$};
    \draw (4,0.7) node[below]{$\conv{\mathcal{A}}$};
    \draw (4,3) node[]{$\conv{\mathcal{A}_w}$};

    \draw[line width=1pt][-, black] (0,2) -- (8,6);
    \draw[line width=2pt][-, col0] (0,2) -- (5,2.5);
    \draw[line width=2pt][-, col2] (5,2.5) -- (8,6);

    \node[circle, draw=black, fill=black, inner sep=0pt,minimum size=0.4em] at (0,2) { };
    \node[circle, draw=black, fill=black, inner sep=0pt,minimum size=0.4em] at (3,3.5) { };
    \node[circle, draw=black, fill=black, inner sep=0pt,minimum size=0.4em] at (5,2.5) { };
    \node[circle, draw=black, fill=black, inner sep=0pt,minimum size=0.4em] at (8,6) { };
    
  \end{tikzpicture}
  \caption{A Coherent Mixed Subdivision}
  \label{fig:mixed}
\end{wrapfigure}

\begin{example}
  \label{ex:mixsub}
  For $a,b\in \ZZ$ with $a< b$ consider the point configuration $\mathcal{A}$ in $\ZZ$
  given by $\mathcal{A}_0:=\set{0,a}$ and $\mathcal{A}_1:=\set{0,b}$. The polytope
  $\conv{\mathcal{A}}$ is sketched as the line segment in \Cref{fig:mixed}. If we choose
  a suitable weight vector $w\in \RR^4$ with $w_a>w_b$ then the associated coherent
  subdivision has cells
  \begin{align*}
    &(\set{0},\set{0,b})\text{ and }\\
    &(\set{0,a},\set{b}).
  \end{align*}
  These correspond to the lower one-dimensional faces of the polytope
  $\conv{\mathcal{A}_w}$ as sketched in \Cref{fig:mixed} as well. The
  subdivision $S_w$ is fine mixed.
\end{example}

What we will want to do is to {\em refine} a given
coherent subdivision of a point configuration $\mathcal{A}$ in $\ZZ^n$ into a
fine mixed subdivision.

\begin{definition}[Refinement]
  Let $\mathcal{A}$ be a point configuration in $\ZZ^n$ and let
  $S_1,S_2$ be two subdivisions of $\mathcal{A}$. The subdivision
  $S_1$ is said to {\em refine} $S_2$ if each cell of $S_1$ is a
  subset of a cell of $S_2$.
\end{definition}

This is made possible by the following lemma:

\begin{lemma}
  \label{lem:refine}
  Let $\mathcal{A}$ be a point configuration in $\ZZ^n$ and suppose
  $\dim(\mathcal{A}) = n$. Let $S_w$ be a coherent subdivision of
  $\mathcal{A}$ associated to a weight vector $w$. For any sufficiently small
  $\epsilon>0$ and any sufficiently generic weight vector $w'$ such that
  $\norm{w-w'}< \epsilon$, the coherent subdivision $S_{w'}$ is fine mixed
  and refines $S_w$.
\end{lemma}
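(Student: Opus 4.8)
The plan is to use a perturbation argument grounded in the theory of coherent (regular) subdivisions via the secondary fan. First I would recall that the space of weight vectors $\RR^{|\mathcal{A}_0|}\times\dots\times\RR^{|\mathcal{A}_k|}$ is partitioned into relatively open polyhedral cones, the cells of the secondary fan of $\mathcal{A}$, so that two weight vectors in the (relative interior of the) same cell induce the same coherent subdivision, and $w'$ inducing a refinement of $S_w$ is equivalent to the secondary cone of $w'$ being contained in the closure of the secondary cone of $w$. Thus, for $w'$ in the relative interior of any maximal (full-dimensional) secondary cone whose closure contains $w$, the subdivision $S_{w'}$ automatically refines $S_w$; and since $w$ lies in the closure of at least one such maximal cone, such $w'$ exist arbitrarily close to $w$. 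This takes care of the refinement claim and of the genericity of $w'$ (avoiding the lower-dimensional walls of the secondary fan is exactly the condition that puts $w'$ in a maximal cone).

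The remaining point is that a coherent subdivision $S_{w'}$ coming from a weight vector $w'$ in the interior of a maximal secondary cone is fine mixed when $\dim(\mathcal{A})=n$. Here I would argue as follows. Lift $\mathcal{A}$ to $\mathcal{A}_{w'}\subset\RR^{n+1}$; a cell $C=(C_0,\dots,C_k)$ of $S_{w'}$ is the projection of a lower facet $\fc$ of $\mathcal{A}_{w'}$, and $\conv{\fc}$ is an $n$-dimensional polytope that is the Minkowski sum $\conv{\fc_0}+\dots+\conv{\fc_k}$ of faces $\fc_i$ of the lifted $\mathcal{A}_{i,w'}$. Genericity of $w'$ means no affinely dependent subset of $\bigcup_i(\mathcal{A}_{i,w'})$ of size $\le n+1$ becomes affinely dependent after lifting beyond what is forced; concretely, for generic $w'$ every lower facet $\fc$ is a \emph{simplex} with exactly $n+1$ vertices, and these $n+1$ vertices distribute among the summands so that $\sum_i(|C_i|-1)=n$. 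This forces $\sum_i\dim(\conv{C_i})=n=\dim(\mathcal{A})$ as well (each summand face is itself a simplex, so $\dim(\conv{C_i})=|C_i|-1$), giving both conditions of \Cref{def:mixdiv}. The precise statement that generic lifting functions produce simplicial (hence fine mixed) coherent subdivisions is standard; I would cite \cite{huber1995} (their treatment of fine mixed subdivisions via generic liftings) and \cite{ziegler1995} for the secondary fan background, rather than reprove it.

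The main obstacle, and the step needing the most care, is the implication ``generic $w'$ in a maximal secondary cone $\Rightarrow$ every lower facet of $\mathcal{A}_{w'}$ is a simplex with its vertices spread across the $C_i$ so that $\sum_i(|C_i|-1)=n$.'' One must be careful that ``generic'' here is genericity \emph{within} the closure of a fixed maximal cone refining $S_w$, not in the whole weight space — but this is harmless, since a full-dimensional cone has nonempty interior and the bad loci (weight vectors making some potential lower facet non-simplicial) are again a finite union of proper subspaces intersected with that cone, hence avoidable. I would also explicitly invoke the hypothesis $\dim(\mathcal{A})=n$: it guarantees that lower facets have the full dimension $n$, so that the simplex count gives exactly $n$ rather than something smaller, and it is what makes the two displayed equalities in \Cref{def:mixdiv} coincide in value. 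Assembling these pieces — secondary fan $\Rightarrow$ refinement and existence of nearby $w'$; genericity $\Rightarrow$ simplicial lower facets; $\dim(\mathcal{A})=n$ $\Rightarrow$ the dimension and cardinality counts both equal $n$ — completes the proof.
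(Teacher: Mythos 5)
Your core argument — perturb $w$ to a generic $w'$ lying in the interior of a full-dimensional secondary cone whose closure contains $w$, so that $S_{w'}$ refines $S_w$ — is the same secondary-fan/secondary-polytope argument that the paper gives, and that part is sound. The paper, however, first passes through the \emph{Cayley embedding} $\mathcal{C}(\mathcal{A}_0,\dots,\mathcal{A}_k)$, which converts the mixed configuration into an \emph{unmixed} one and identifies fine mixed subdivisions of $\mathcal{A}$ with regular \emph{triangulations} of the Cayley configuration. This makes the ``fine mixed'' conclusion immediate once you know generic liftings give triangulations.

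The place where your direct (non-Cayley) route goes wrong is the claim that ``for generic $w'$ every lower facet $\fc$ of $\mathcal{A}_{w'}$ is a simplex with exactly $n+1$ vertices.'' A lower facet of $\mathcal{A}_{w'}$ is a Minkowski sum $\fc_0 + \dots + \fc_k$ of faces of the lifted summands, and such a sum is generically \emph{not} a simplex — already the sum of two line segments in general position in $\RR^2$ is a parallelogram. What is true generically, and what the fine mixed condition actually asserts, is that each $C_i$ is affinely independent and the affine hulls of the $\conv{C_i}$ are in direct-sum position, so that $\sum_i \dim\conv{C_i} = \sum_i(\abs{C_i}-1) = n$; this is not the same as the sum being a simplex. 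Your subsequent sentence ``each summand face is itself a simplex, so $\dim(\conv{C_i})=|C_i|-1$'' is also circular — affine independence of $C_i$ is exactly what needs to be established. Since you do cite \cite{huber1995} for the standard fact that generic liftings produce fine mixed subdivisions, the overall conclusion is recoverable, but the intermediate geometric picture is wrong, and the Cayley-trick route the paper takes is the clean way to avoid this: in the Cayley configuration the cells really are simplices, with $n+k+1$ vertices distributed as $\bigcup_i C_i$, and $\sum_i \abs{C_i} = n+k+1$ is exactly the fine mixed condition.
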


In \Cref{lem:refine}, ``sufficiently generic'' means ``lying outside a
union of finitely many proper linear subspaces''.

\begin{proof}

  The coherent subdivisions of $\mathcal{A}$ are in a refinement preserving
  bijection with the {\em regular subdivisions} \cite[Definition 5.3
  in][]{ziegler1995} of the  {\em Cayley embedding}
  $\mathcal{C}(\mathcal{A}_0,\dots,\mathcal{A}_k)$ \cite[Section 2.4, Theorem 3.1
  in][]{huber2000}. This bijection maps fine mixed subdivisions to
  regular triangulations, i.e. subdivisions where each cell is a
  simplex.

  We therefore have to prove only the following statement: Fix a
  finite set $\mathcal{B}\subset \ZZ^m$ s.t. the polytope
  $\poly = \conv{\mathcal{B}}$ is full-dimensional. Let $S_w$ be a regular
  subdivision of $\poly$ given by a lifting function $w$ on
  $\mathcal{B}$.  If $w'$ is another lifting function on $\mathcal{B}$ and
  $\norm{w-w'}<\epsilon$ for sufficiently small $\epsilon$ then the subdivision
  $S_{w'}$ of $\poly$ induced by $w'$ is a regular triangulation
  refining $S_w$.

  The covector $w$ lies in the normal cone of a face $\fc$ of the {\em
    secondary polytope} $\Sigma(\poly)$ \cite[Definition 1.6, Theorem 1.7
  in Chapter 7 of][]{gelfand1994}. If $\epsilon$ is sufficiently small, then
  $w'$ will lie in the interior of the normal cone of a vertex of
  $\Sigma(\poly)$ contained in $\fc$. This implies that $S_{w'}$ is a
  regular triangulation refining $S_w$ \cite[Theorem 2.4 in Chapter 7
  of][]{gelfand1994}.

  % For the fact that $S_{w'}$ is fine mixed for a sufficiently generic
  % weight vector $w'$, see the discussion below Example 2.7 in
  % \cite{huber1995}. Let $u:\RR^{n+1}\rightarrow \RR^n$ be the projection
  % forgetting the last coordinate. A face
  % $\fc':=\mathcal{A}_{w'}^{\gamma}$ of the lifted configuration
  % $\mathcal{A}_{w'}$ is a lower facet of $\mathcal{A}_{w'}$ iff it is of dimension
  % $\dim(\mathcal{A})$ and iff there is no choice of
  % $(x_0,\dots,x_k)\in \mathcal{A}_{0,w}\times \dots \times \mathcal{A}_{k,w}$ such that
  % $x:=x_0+\dots+x_k$ satisfies the following statements:
  % \begin{itemize}
  % \item $x\in \conv{\mathcal{A}_{w'}}\setminus \conv{\fc}$,
  % \item $x$ has minimal last coordinate in the fiber of $u$ over
  %   $u(x)$ in $\conv{\mathcal{A}_{w'}}$ and
  % \item $u(x)\in u(\conv{\fc})$,
  % \end{itemize}
  % see also
  % Definition 5.3 in \cite{ziegler1995}.

  % These two properties of $\fc'$ are preserved for the cell $\fc$ of
  % $\mathcal{A}_w$ corresponding to $\fc'$ if $w'$ is sufficiently generic and
  % $\epsilon$ is sufficiently small. Therefore $\fc$ must be contained in a
  % lower facet of $\mathcal{A}_w$, proving the statement.
\end{proof}

\section{Computing Mixed Fiber Polytopes}
\label{sec:main}

Analogously to mixed fiber polytopes, the mixed volume of a collection
of polytopes $\poly_1,\dots,\poly_r\subset \RR^n$ is the coefficient of
$\lambda_1\dots \lambda_r$ in the polynomial
$\vol(\lambda_1\poly_1+\dots+\lambda_r\poly_r)$ in variables
$\lambda_1,\dots,\lambda_r$.  A mixed subdivision of an underlying point
configuration of $\poly_1,\dots,\poly_r$ now allows to explicitly
compute this mixed volume as the sum of volumes of certain cells in
this subdivision. This is the content of Theorem 2.4 in
\cite{huber1995} which we now adapt to our situation.

Recall that, using \Cref{thm:mfpvert}, the computation of the mixed
fiber polytope of convex polytopes $\poly_0,\dots,\poly_k$ reduces to
the case where $\prjb$ induces a bijection from
$\poly_0+\dots+\poly_k$ to $\prjb(\poly_0+\dots+\poly_k)$.  In the
following lemma we denote by $\varepsilon_i\in \ZZ^{k+1}$ the vector with
$0$ in the $i$th entry and $1$ everywhere else.

\begin{lemma}
  \label{cor:mfform2}
  Let $\mathcal{A}:=(\mathcal{A}_0,\dots,\mathcal{A}_k)$ be a point configuration in
  $\ZZ^{n}$. Denote $\poly_i:=\conv{\mathcal{A}_i}$ and assume that
  $\prjb$ induces a bijection from $\poly_0+\dots+\poly_k$ to
  $\prjb(\poly_0+\dots+\poly_k)$. Let $S$ be a mixed subdivision of
  $\prjb(\mathcal{A}):=(\prjb(\mathcal{A}_0),\dots,\prjb(\mathcal{A}_k))$. Then
  \[
    \mmo(\poly_0,\dots,\poly_k)  = \sum_{i=0}^k\sum_{\substack{C\in S\\ \type(C) = \varepsilon_i}} \vol(C)\cdot \prja(\prjb|_{\mathcal{A}_i}^{-1}(C_i)).
  \]
\end{lemma}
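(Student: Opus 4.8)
The plan is to reduce this to the classical mixed-volume formula of Huber--Sturmfels (Theorem 2.4 in \cite{huber1995}) by exploiting the fact that, under the bijectivity hypothesis, the whole Minkowski integral collapses onto a single section. First I would observe that since $\prjb$ restricted to $\poly := \poly_0 + \dots + \poly_k$ is a bijection onto $\prjb(\poly)$, there is a \emph{unique} continuous section $s : \prjb(\poly) \to \poly$ of $\prjb$, namely $s = (\prjb|_{\poly})^{-1}$, and hence $\mint{\poly}$ is the single point $\int_{\prjb(\poly)} s(x)\,dx$ and $\mmo(\poly) = \{\prja(\int_{\prjb(\poly)} s(x)\,dx)\}$. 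I would then want to expand this integral using the mixed subdivision $S$ of $\prjb(\mathcal{A})$: the cells $\conv{C}$ for $C \in S$ tile $\prjb(\poly)$, and on each cell the section $s$ agrees with the inverse of $\prjb$ restricted to the corresponding face $\conv{C_0}+\dots+\conv{C_k}$ of $\poly$ (this uses that a mixed subdivision of $\prjb(\mathcal{A})$ is the $\prjb$-image of a subdivision of $\poly$ into faces, which one gets by lifting; I would need to make this correspondence precise, presumably via the Cayley trick as in the proof of \Cref{lem:refine}). Thus $\int_{\prjb(\poly)} s \,dx = \sum_{C \in S} \int_{\conv{C}} (\prjb|_{\poly})^{-1}(x)\,dx$.

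Next I would analyze the contribution of a single cell $C = (C_0,\dots,C_k)$. Writing $F_i := \conv{C_i}$, the relevant face of $\poly$ is $F_0 + \dots + F_k$, and since $S$ is mixed we have $\sum_i \dim F_i = \dim(\prjb(\mathcal{A})) = \dim(\mathcal{A})$ — here I'd use that $\prjb$ is injective on $\poly$, so $\dim \prjb(\poly) = \dim \poly = n$, and similarly on each face, so the fiber of $\prjb$ over a point of $\mathrm{relint}(\conv{C})$ meets $F_0+\dots+F_k$ in exactly one point. The key structural point is that $(\prjb|_{F_0+\dots+F_k})^{-1}$ is an \emph{affine} map on $\conv{C}$ (restriction of a linear inverse to a face), so $\int_{\conv{C}} (\prjb|_{\poly})^{-1}(x)\,dx = \vol(\conv{C}) \cdot b_C$, where $b_C$ is the image under this affine map of the barycenter of $\conv{C}$. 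Applying $\prja$ and summing, $\mmo(\poly_\bullet)$ becomes $\sum_{C\in S}\vol(C)\cdot \prja(b_C)$ (up to the normalization convention — I should double-check whether $\mmo$ in \cite{esterov2008} carries the $1/\vol$ factor or not; from \Cref{def:fibpol} here it does not, so the volumes appear bare, consistent with the claimed formula).

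Then I would match this against the right-hand side. The cells $C$ with $\type(C) = \varepsilon_i$ are exactly the ones where $C_i$ is a vertex (a single point $a_i \in \mathcal{A}_i$, since $\dim \conv{C_i} = 0$) and all other $C_j$ have positive-dimensional images whose dimensions sum to $\dim(\mathcal{A})$; these are precisely the cells that contribute to the $i$-th mixed-volume summand in Huber--Sturmfels. For such a cell, the face $F_0 + \dots + F_k$ of $\poly$ has its $\yy$-projection equal to $\prjb(F_0 + \dots + F_k) = \conv{C}$ (the vertex $a_i$ contributes only a translation), and the point $\prjb|_{\mathcal{A}_i}^{-1}(C_i)$ appearing in the statement is just that lift — so $\prja(b_C)$ should work out to $\prja(\prjb|_{\mathcal{A}_i}^{-1}(C_i))$ translated suitably, or more cleanly the multilinearity argument gives it directly. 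Here is where I would instead take the cleaner route: rather than computing $b_C$ geometrically, I would invoke the \emph{multilinearity and symmetry} of $\mmo(\poly_0,\dots,\poly_k)$ (noted in the Remark after \Cref{def:mfp}) together with \Cref{lem:prob}, expand $\mmo(\lambda_0\poly_0 + \dots + \lambda_k\poly_k) = \mmo(\sum_i \lambda_i \poly_i)$ using the mixed-subdivision decomposition of $\prjb(\sum_i \lambda_i \poly_i)$, and read off the coefficient of $\lambda_0\cdots\lambda_k$; the cells surviving in that coefficient are exactly those of type $\varepsilon_i$ for some $i$, and the surviving term is a volume times the $\prja$-image of the appropriate vertex-lift.

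The main obstacle I anticipate is the bookkeeping in the correspondence between cells of the mixed subdivision $S$ of the \emph{projected} configuration $\prjb(\mathcal{A})$ and faces of the Minkowski sum $\poly$ upstairs: I need that every cell $\conv{C}$ of $S$ arises as $\prjb$ of a genuine face $F_0+\dots+F_k$ of $\poly$, with the $F_i = \conv{C_i}$ matching up index by index, and that on that face $\prjb$ is a bijection so the affine-inverse argument goes through. This requires knowing that a mixed subdivision of $\prjb(\mathcal{A})$ lifts (via the Cayley polytope / a common refinement with the coherent subdivision $S_w$ coming from \Cref{thm:mfpvert}) compatibly with the projection $\prjb$ — essentially that the bijectivity of $\prjb$ on $\poly$ forces the mixed subdivision to "be" a subdivision of $\poly$ into faces after transport. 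A secondary, purely clerical hazard is keeping the volume normalization consistent between Esterov--Khovanskii's convention and the one fixed in \Cref{def:fibpol}, and making sure the $\type(C) = \varepsilon_i$ condition (which fixes $\dim\conv{C_i} = 0$ and $\dim\conv{C_j} = 1$... — wait, $\varepsilon_i$ has $0$ in slot $i$ and $1$ elsewhere, so $\dim\conv{C_j}=1$ for $j\neq i$, summing to $k = \dim(\mathcal{A})$, consistent) aligns exactly with the mixed cells contributing to the $i$-th term of the Huber--Sturmfels mixed-volume sum.
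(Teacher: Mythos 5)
Your geometric approach conflates two distinct objects. Under the bijectivity hypothesis, the fiber polytope $\mmo(\poly)$ of the Minkowski sum $\poly := \poly_0 + \dots + \poly_k$ is indeed the singleton $\set{\prja(\int_{\prjb(\poly)}s\,dx)}$, and tiling by $S$ gives $\int_{\prjb(\poly)}s\,dx = \sum_{C\in S}\vol(C)\,b_C$ with the sum over \emph{all} cells of $S$. But the lemma concerns the \emph{mixed} fiber polytope $\mmo(\poly_0,\dots,\poly_k)$, the coefficient of $\lambda_0\cdots\lambda_k$ in $\mmo(\lambda_0\poly_0+\dots+\lambda_k\poly_k)$, and these differ even when both are singletons: already with $k=1$, $n=2$, $\poly_0 = \set{(1,0)}$, $\poly_1 = \conv{\set{(0,0),(2,1)}}$, one finds $\mmo(\poly) = \set{2}$ while $\mmo(\poly_0,\poly_1) = \set{1}$, the latter being what the lemma's right-hand side returns via the unique cell of type $\varepsilon_0$. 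So when you write ``$\mmo(\poly_\bullet)$ becomes $\sum_{C\in S}\vol(C)\cdot\prja(b_C)$'' you have silently replaced one polytope by the other; this is not a $1/\vol$-normalization issue, and the restriction to cells of type $\varepsilon_i$ is not cosmetic bookkeeping.

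You then pivot to multilinearity, which \emph{is} the paper's route, but you stop exactly where the argument becomes nontrivial. The paper first distributes $\mmo(\poly_\lambda) = \sum_{C\in S}\mmo(\lambda_0\conv{C_0}+\dots+\lambda_k\conv{C_k})$ over cells --- this uses only the bijectivity hypothesis and the subdivision property of $S$, not the Cayley-embedding lifting you flag as the main obstacle (that concern is relevant to \Cref{thm:main}, not here) --- thereby reducing to the single-cell case $\sum_i d_i = k$ with $d_i := \dim\prjb(\poly_i)$. It then writes the section as $s = \sum_i s_i$ with each $s_i$ an affine inverse of $\prjb$ on $\poly_i$, applies Fubini to express the Minkowski integral as $\prod_i\lambda_i^{d_i}\cdot\sum_i\lambda_i\,(\cdots)$, and reads off that the coefficient of $\lambda_0\cdots\lambda_k$ is nonzero precisely when some $d_i = 0$ and $d_j = 1$ for all $j\neq i$, i.e.\ $\type(C)=\varepsilon_i$. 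That degree count is the crux of the proof: it is exactly what discards the contributions of the other mixed cells, and your proposal contains no mechanism that does the same job.
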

\begin{proof}
  For $\lambda\in \RR^{k+1}$, denote
  $\poly_{\lambda}:=\lambda_0\poly_0+\dots + \lambda_k\poly_k$. First note that
  \[\mmo(\poly_{\lambda}) = \sum_{C\in S} \mmo(\lambda_0\conv{C_0}+\dots+\lambda_k\conv{C_k}).\]
  Therefore we assume that $\prjb(\mathcal{A})$ defines a mixed subdivision of
  itself. This is the case if, denoting $d_i:=\dim(\prjb(\poly_i))$,
  we have $\sum_{i=0}^kd_i = k$.  Since $\prjb$ is linear and induces a
  bijection from $\poly_{\lambda}$ to $\prjb(\poly_{\lambda})$ for
  $\lambda_0,\dots,\lambda_k>0$, it also induces a bijection from
  $\poly_i$ to $\prjb(\poly_i)$. Further, if
  $s_i:\prjb(\poly_i)\rightarrow \poly_i$ denotes the corresponding inverse map,
  then $s_i$ is an affine map.

  Additionally, since $\sum_{i=0}^kd_i = k$ we can choose, for each
  $i=0,\dots,k$, a (possibly empty) set of coordinates $x_i$ on
  $\affspan{\prjb(\mathcal{A}_i)}$ such that $x:=(x_0,\dots,x_k)$ gives
  coordinates on $\affspan{\prjb(\mathcal{A})}$. If we now define
  $s(x):=\sum_{i=0}^ks_i(x_i)$ then the Minkowski integral of
  $\poly_{\lambda}$ consists of the singleton
  \begin{align*}
    \int_{\prjb(\poly_{\lambda})}s(x)dx &=\sum_{i=0}^k\int_{\lambda_i\prjb(\poly_i)}s_i(x_i)dx_i\cdot \prod_{j\neq i}\vol(\prjb(\lambda_j\poly_j))\\
    &=\sum_{i=0}^k\lambda_i^{d_i+1}\prod_{j\neq i}\lambda_j^{d_j}\int_{\poly_i}s_i(x_i)dx_i\cdot \prod_{j\neq i}\vol(\prjb(\poly_j))\\
    &=\prod_{i=0}^k\lambda_i^{d_i}\sum_{i=0}^k\lambda_i\int_{\prjb(\poly_i)}s_i(x_i)dx_i\cdot \prod_{j\neq i}\vol(\prjb(\poly_j)).
  \end{align*}
  Hence we see that the coefficient of $\lambda_0\dots \lambda_k$ in this integral
  is nonzero if and only if, for some $i$, we have $d_i=0$ and for all
  $j\neq i$ we have $\lambda_j=1$. In this case
  $\int_{\prjb(\poly_i)}s_i(x_i)dx_i$ is just the singleton
  $\mathcal{A}_i$ and
  $\prod_{j\neq i}\vol(\prjb(\poly_j)) = \vol(\prjb(\poly))$.  This concludes
  the proof.
\end{proof}

To state and prove our main theoretical result, we need one last
notion of subdivisions of point configurations:

\begin{definition}[$\prjb$-Coherent Subdivision]
  Let $\mathcal{A}:=(\mathcal{A}_0, \dots, \mathcal{A}_k)$ be a point configuration in
  $\ZZ^n$. Fix a generic covector $\gamma\in (\RR^{n-k})^{*}$. Define
\[w_{\gamma}\in \RR^{\abs{\prjb(\mathcal{A}_0)}}\times \dots \times \RR^{\abs{\prjb(\mathcal{A}_k)}}\]
via
\[w_{\gamma,a}:=\min\setdes{\gamma(a')}{a'\in \prjb|_{\mathcal{A}_i}^{-1}(a)},\; a\in \mathcal{A}_i.\]
The coherent subdivision $S_{w_{\gamma}}$ is called a {\em
  $\prjb$-coherent subdivision} of $\prjb(\mathcal{A})$.
\end{definition}

Note that the cells of $S_{w_{\gamma}}$ coincide with the images of the
$k$-dimensional faces of $\mathcal{A}$ under $\prjb$ on which $\gamma$ attains its
fiberwise minimum. These are precisely the $k$-dimensional faces
over which we we are summing in \Cref{thm:mfpvert}.

Finally we are ready to state our main theoretical result:

\begin{theorem}
  \label{thm:main}
  Let $\mathcal{A}:=(\mathcal{A}_0, \dots, \mathcal{A}_k)$ be a point configuration in
  $\ZZ^n$. Denote $\poly_i:=\conv{\mathcal{A}_i}$. For a generic covector
  $\gamma\in (\RR^{n-k})^{*}$ let $\mathcal{A}_{\gamma}$ be the point configuration
  consisting of the sets
  \[ \mathcal{A}_{i,\gamma}:=\setdes{\argmin_{\gamma}\prjb|_{A_i}^{-1}(\prjb(a))}{a\in \mathcal{A}_i}. \]
  Let $S$ be a mixed subdivision of $\prjb(\mathcal{A})$ which
  refines the $\prjb$-coherent subdivision of $\prjb(\mathcal{A})$ given by $\gamma$. Then
  \[\mmo(\poly_0,\dots,\poly_k)^{\gamma} = \sum_{i=0}^k\sum_{\substack{C\in S\\ \type(C) = \varepsilon_i}}\vol(C)\cdot \prja(\prjb|_{\mathcal{A}_{i,\gamma}}^{-1}(C_i)).\]
\end{theorem}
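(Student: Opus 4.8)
The plan is to combine \Cref{thm:mfpvert} with \Cref{cor:mfform2} by showing that both decompose $\mmo(\poly_\bullet)^\gamma$ as a sum over the same combinatorial pieces. First I would recall that by \Cref{thm:mfpvert} we have $\mmo(\poly_\bullet)^\gamma = \sum_{\fc_\bullet \in \mathcal{F}_{k,\gamma}} \mmo(\fc_\bullet)$, where the faces $\fc_\bullet$ range over the $k$-dimensional faces of $\poly_\bullet$ on which $\gamma$ is fiberwise minimized. The key observation is that each such face $\fc_\bullet = (\fc_0, \dots, \fc_k)$ is projected bijectively by $\prjb$ onto its image, and the images $\prjb(\fc_\bullet)$ are exactly the cells of the $\prjb$-coherent subdivision $S_{w_\gamma}$ of $\prjb(\mathcal{A})$ (as noted right after the definition of $\prjb$-coherent subdivisions). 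Moreover, the face $\fc_i$ equals $\conv{\mathcal{A}_{i,\gamma} \cap \prjb|_{\mathcal{A}_i}^{-1}(\text{cell})}$ — that is, $\mathcal{A}_\gamma$ is precisely the point configuration that records, for each point of $\prjb(\mathcal{A})$, its $\gamma$-minimal preimage, so that $\prjb$ induces a bijection $\conv{\mathcal{A}_{0,\gamma} + \dots + \mathcal{A}_{k,\gamma}} \to \prjb(\poly_0 + \dots + \poly_k)$ on each relevant piece.

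Next I would set up the refinement. Let $S$ be the given mixed subdivision of $\prjb(\mathcal{A})$ refining $S_{w_\gamma}$. For each cell $D$ of $S_{w_\gamma}$ — corresponding to a face $\fc_\bullet \in \mathcal{F}_{k,\gamma}$ — the cells $C \in S$ with $\conv{C} \subseteq \conv{D}$ form a mixed subdivision of the point configuration $(\mathcal{A}_{0,\gamma}', \dots, \mathcal{A}_{k,\gamma}')$ underlying $\fc_\bullet$, where $\mathcal{A}_{i,\gamma}' = \mathcal{A}_{i,\gamma} \cap \prjb|_{\mathcal{A}_{i,\gamma}}^{-1}(\conv{D})$. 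Since $\prjb$ restricts to a bijection on $\sum_i \fc_i$, I can apply \Cref{cor:mfform2} to this sub-configuration: it gives
\[
\mmo(\fc_\bullet) = \sum_{i=0}^k \sum_{\substack{C \in S,\ \conv{C}\subseteq \conv{D}\\ \type(C) = \varepsilon_i}} \vol(C)\cdot \prja(\prjb|_{\mathcal{A}_{i,\gamma}}^{-1}(C_i)),
\]
where I should check that $\prjb|_{\mathcal{A}_{i,\gamma}}^{-1}(C_i)$ agrees with $\prjb|_{\mathcal{A}'_{i,\gamma}}^{-1}(C_i)$ — this holds because $C_i \subseteq \prjb(\mathcal{A}'_{i,\gamma})$ forces the relevant preimages to lie in $\mathcal{A}'_{i,\gamma}$.

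Finally, summing over all cells $D$ of $S_{w_\gamma}$ gives
\[
\mmo(\poly_\bullet)^\gamma = \sum_{\fc_\bullet \in \mathcal{F}_{k,\gamma}} \mmo(\fc_\bullet) = \sum_{i=0}^k \sum_{\substack{C \in S\\ \type(C) = \varepsilon_i}} \vol(C)\cdot \prja(\prjb|_{\mathcal{A}_{i,\gamma}}^{-1}(C_i)),
\]
since every cell $C$ of $S$ lies in exactly one cell $D$ of the coarser subdivision $S_{w_\gamma}$ (as $S$ refines it), and cells $C$ whose support is not $k$-dimensional after projection, or whose type does not equal some $\varepsilon_i$, contribute nothing — a point I should verify using that $S$ is mixed, so $\sum_j \dim\conv{C_j} = k$ for every $C \in S$, and only the types $\varepsilon_i$ yield a nonzero coefficient of $\lambda_0 \cdots \lambda_k$.

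I expect the main obstacle to be the bookkeeping around $\mathcal{A}_\gamma$: one must argue carefully that replacing the original point configuration $\mathcal{A}$ by $\mathcal{A}_\gamma$ does not change the combinatorics of the subdivision (i.e., $\prjb(\mathcal{A}) = \prjb(\mathcal{A}_\gamma)$ as point configurations, and $S$ is simultaneously a mixed subdivision of $\prjb(\mathcal{A}_\gamma)$ refining $S_{w_\gamma}$), while it does change the \emph{lifts} $\prjb|^{-1}$ used in the formula — these are now the $\gamma$-minimal representatives, which is exactly what makes the sections $s_i$ in the proof of \Cref{cor:mfform2} pick out the fiberwise-minimal faces rather than arbitrary ones. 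Verifying that genericity of $\gamma$ makes $\mathcal{A}_{i,\gamma}$ well-defined (the argmin is a single point) and that the resulting $\fc_\bullet$ are genuinely the $k$-faces of \Cref{thm:mfpvert} is where the care is needed; the volume and projection computations themselves are then routine via \Cref{cor:mfform2}.
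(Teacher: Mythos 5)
Your proposal is correct and follows the same route as the paper's proof: apply \Cref{thm:mfpvert} to reduce to a sum over faces in $\mathcal{F}_{k,\gamma}$, identify these with cells of the $\prjb$-coherent subdivision $S_{w_{\gamma}}$, and invoke \Cref{cor:mfform2} on each such face with $S$ restricted to its image. The paper states this more tersely --- in particular noting that $\mathcal{F}_{k,\gamma}$ is unchanged when $\poly$ is replaced by $\poly_{\gamma}$, which is what licenses the passage to the $\gamma$-minimal lifts $\mathcal{A}_{i,\gamma}$ --- but the argument is the one you spell out.
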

\begin{proof}
  According to \Cref{thm:mfpvert}, we have
  \[\mmo(\poly_{\bullet})^{\gamma} = \sum_{\fc_{\bullet}\in \mathcal{F}_{k,\gamma}}
    \mmo(\fc_{\bullet}).\] Define
  $\poly_{i,\gamma}:=\conv{\mathcal{A}_{i,\gamma}}$ and
  $\poly_{\gamma}:=(\poly_{0,\gamma},\dots,\poly_{k,\gamma})$.

  Note now that we have
  \begin{align*}
    \mathcal{F}_{k,\gamma} &= \setdes{\fc\in \mathcal{F}_k(\poly)}{\forall x\in \sum_i\fc_i : x = \argmin_{\gamma}\prjb^{-1}(\prjb(x))}\\
            &= \setdes{\fc\in \mathcal{F}_k(\poly_{\gamma})}{\forall x\in \sum_i\fc_i : x = \argmin_{\gamma}\prjb^{-1}(\prjb(x))}.
  \end{align*}
  To conclude the proof, we apply \Cref{cor:mfform2} to each
  face of $\mathcal{A}_{\gamma}$ which corresponds to a face in
  $\mathcal{F}_{k,\gamma}$, with the mixed subdivision $S$ restricted to the image of
  this face of $\mathcal{A}_{\gamma}$.
\end{proof}

In pseudocode, \Cref{thm:main} yields \Cref{alg:vert} to compute
vertices of mixed fiber polytopes.

\begin{algorithm}
  \caption{Computing Vertices of Mixed Fiber Polytopes}
  \label{alg:vert}
  \raggedright

  \begin{description}
  \item[Input] Finite sets
    $\mathcal{A}_0,\dots,\mathcal{A}_k\subset \ZZ^n$, a covector $\gamma\in (\RR^{n-k})^{*}$.
  \item[Output] The vertex $\mmo(\poly_0,\dots,\poly_k)^{\gamma}$ of the
    mixed fiber polytope $\mmo(\poly_0,\dots,\poly_k)$ where
    $\poly_i:=\conv{\mathcal{A}_i}$.
  \end{description}

  \begin{pseudo}
    \kw{for} $i=0,\dots,k$\\+
    $\mathcal{A}_{i,\gamma}\gets\setdes{\argmin_{\gamma}\prjb|_{A_i}^{-1}(\prjb(a))}{a\in \mathcal{A}_i}$\\
    $w_{i,\gamma}\gets (\min_{\gamma}\prjb|_{\mathcal{A}_i}^{-1}(a)\;|\; a\in \prjb(\mathcal{A}_i))$\\-
    $w_{\gamma}\gets (w_{0,\gamma},\dots,w_{k,\gamma})$\\
    Choose $\epsilon > 0$\label{ln:eps}\\
    Choose $w\in \RR^{\abs{\prjb(\mathcal{A}_0)}}\times \dots \times \RR^{\abs{\prjb(\mathcal{A}_k)}}$ s.t. $\norm{w-w_{\gamma}} < \epsilon$\label{ln:w}\\
    $S_w\gets $ the mixed subdivision of $\prjb(\mathcal{A})$ induced by $w$\\
    $v\gets 0\in \RR^{n-k}$\\
    \kw{for} $i=0,\dots,k$\\+
    \kw{for} $C\in S_w$\\+
    \kw{if} $\type(C)=\varepsilon_i$\\+
    $v\gets v + \vol(C)\cdot \prja(\prjb|_{\mathcal{A}_{i,\gamma}}^{-1}(C_i))\in \RR^{n-k}$\\---
    \kw{return} $v$
  \end{pseudo}
\end{algorithm}

\begin{theorem}
  \label{thm:algcor}
  On input $\mathcal{A}_0,\dots,\mathcal{A}_k\in \ZZ^n$ and
  $\gamma\in (\RR^{n-k})^{*}$ \Cref{alg:vert} is correct under the following
  conditions:
  \begin{enumerate}
  \item \label{it:1} $\gamma$ is sufficiently generic, i.e. lies outside a
    union of finitely many linear subspaces of $(\RR^{n-k})^{*}$.
  \item \label{it:2} $\epsilon$ is chosen sufficiently small in line \ref{ln:eps}. 
  \item \label{it:3} $w$ is chosen sufficiently generic in line \ref{ln:w},
    i.e. lies outside a union of finitely many linear subspaces of
    $\RR^{\abs{\prjb(\mathcal{A}_0)}}\times \dots \times \RR^{\abs{\prjb(\mathcal{A}_k)}}$.
  \end{enumerate}
\end{theorem}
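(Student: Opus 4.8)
The plan is to verify that, under the three genericity/smallness conditions, each step of \Cref{alg:vert} does what it is supposed to and that the returned value equals $\mmo(\poly_0,\dots,\poly_k)^{\gamma}$ by appealing to \Cref{thm:main}. The bulk of the work is checking that the hypotheses of \Cref{thm:main} are actually met by the objects the algorithm constructs, so I would organize the proof around that single invocation.

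First I would observe that condition~\ref{it:1} is exactly the genericity hypothesis needed in \Cref{thm:mfpvert} and \Cref{thm:main}: it guarantees that $\gamma$ lies outside the finitely many bad subspaces, so that $\mmo(\poly_{\bullet})^{\gamma}$ is a single vertex and the sum over $\mathcal{F}_{k,\gamma}$ is well defined; it also guarantees that the weight vector $w_{\gamma}$ computed in the first loop is well defined, i.e.\ that the minimum defining $w_{\gamma,a}$ and the $\argmin$ defining $\mathcal{A}_{i,\gamma}$ are attained at a unique point of each fiber $\prjb|_{\mathcal{A}_i}^{-1}(\prjb(a))$. I would spell out that the lines computing $\mathcal{A}_{i,\gamma}$ and $w_{i,\gamma}$ literally implement the definitions of $\mathcal{A}_{\gamma}$ and of the weight vector $w_\gamma$ appearing in the definition of a $\prjb$-coherent subdivision, so after the first loop we hold precisely $w_\gamma$ and the configuration $\mathcal{A}_\gamma$ from the statement of \Cref{thm:main}.

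Next I would justify that $S_w$ is a mixed subdivision of $\prjb(\mathcal{A})$ refining the $\prjb$-coherent subdivision $S_{w_\gamma}$. This is where conditions~\ref{it:2} and~\ref{it:3} enter, together with \Cref{lem:refine}: for $\epsilon$ sufficiently small and $w$ sufficiently generic with $\norm{w-w_\gamma}<\epsilon$, the coherent subdivision $S_w$ is fine mixed and refines $S_{w_\gamma}$. A minor point to address is that \Cref{lem:refine} is stated for full-dimensional point configurations, whereas $\prjb(\mathcal{A})$ need not be full-dimensional in $\RR^k$; but we may restrict everything to the affine span $\affspan{\prjb(\mathcal{A})}$, in which $\prjb(\mathcal{A})$ is full-dimensional, and the lemma applies there verbatim (this is essentially the same reduction already used inside the proof of \Cref{cor:mfform2}). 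Since a fine mixed subdivision is in particular a mixed subdivision, $S_w$ meets the hypothesis of \Cref{thm:main}.

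Finally, with $S=S_w$, $\mathcal{A}$, $\gamma$ as above, \Cref{thm:main} gives
\[
  \mmo(\poly_0,\dots,\poly_k)^{\gamma}
   = \sum_{i=0}^k \sum_{\substack{C\in S_w\\ \type(C)=\varepsilon_i}}
     \vol(C)\cdot \prja(\prjb|_{\mathcal{A}_{i,\gamma}}^{-1}(C_i)),
\]
and the final nested loop of \Cref{alg:vert} accumulates exactly this sum into $v$, so $v$ is returned correctly. I would also note in passing that $\prjb|_{\mathcal{A}_{i,\gamma}}$ is injective by construction of $\mathcal{A}_{i,\gamma}$ (each point of $\prjb(\mathcal{A}_i)$ has a unique preimage in $\mathcal{A}_{i,\gamma}$), so the inverse $\prjb|_{\mathcal{A}_{i,\gamma}}^{-1}(C_i)$ is unambiguously defined on the relevant cells. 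I expect the main obstacle to be the bookkeeping around the non-full-dimensional case and making precise the dependence of the finitely many excluded subspaces in conditions~\ref{it:1} and~\ref{it:3} on each other — in particular, that the genericity of $\gamma$ can be chosen first and then $\epsilon$ and $w$ adapted to it — rather than anything deep; everything substantive has been pushed into \Cref{thm:main} and \Cref{lem:refine}.
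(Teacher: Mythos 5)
Your proof is correct and follows essentially the same route as the paper's: condition~\ref{it:1} ensures $\mmo(\poly_{\bullet})^{\gamma}$ is a vertex (and, as you helpfully add, that the $\argmin$'s in the first loop are singletons), conditions~\ref{it:2} and~\ref{it:3} together with \Cref{lem:refine} give that $S_w$ is a mixed subdivision refining the $\prjb$-coherent subdivision induced by $\gamma$, and then \Cref{thm:main} applies. Your extra remarks (restricting to $\affspan{\prjb(\mathcal{A})}$ to invoke \Cref{lem:refine} when $\prjb(\mathcal{A})$ is not full-dimensional, and the injectivity of $\prjb|_{\mathcal{A}_{i,\gamma}}$) are sensible points the paper leaves implicit, but they do not change the structure of the argument.
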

\begin{proof}
  If condition \ref{it:1} is met, then
  $\mmo(\poly_0,\dots,\poly_k)^{\gamma}$ is indeed a vertex of
  $\mmo(\poly_0,\dots,\poly_k)$. If the conditions \ref{it:2} and
  \ref{it:3} are met, then, according to \Cref{lem:refine}, $S_w$ is a
  mixed subdivision of $\prjb(\mathcal{A})$ which refines the
  $\prjb$-coherent subdivision of $\prjb(\mathcal{A})$ induced by
  $\gamma$. The claimed correctness then follows from \Cref{thm:main}.
\end{proof}

Let $\mathcal{A}:=(\mathcal{A}_0,\dots,\mathcal{A}_k)$ be a point configuration in
$\ZZ^n$ and let $\poly_i:=\conv{\mathcal{A}_i}$. While assumptions \ref{it:1}
and \ref{it:3} in \Cref{thm:algcor} are probabilistic assumptions,
true with probability one, it is interesting to ask what can go wrong
if assumption \ref{it:2} is not satisfied. A careful reading of the
proof of \Cref{cor:mfform2} reveals that, in this case,
\Cref{alg:vert} computes an integer point contained in the mixed fiber
polytope $\mmo(\poly_0,\dots,\poly_k)$, but not necessarily a
vertex. Indeed, the section $s$ of $\poly:=\poly_0+\dots+\poly_k$ used
in this proof (and extended over all cells of the mixed subdivision
$S_w$ used by \Cref{alg:vert}) is always a section of $\poly$ even if
$S_w$ does not refine the $\prjb$-coherent subdivision of
$\prjb(\mathcal{A})$ used in \Cref{alg:vert}.

\begin{wrapfigure}{l}{0.6\textwidth}
  \centering
  \begin{tikzpicture}[thick,fill opacity=0.9,scale=0.45,every node/.style={scale=0.8}]
    
    % triangal lines
    \draw[-, black] (0,0) -- (0,7);
    \draw[-, black] (0,0) -- (7,0); 
    \draw[line width=2.5pt][-, black] (7,0) -- (0,7);
    
    % fiber line
    \draw[dashed] (4,-2.5) -- (4,9);

    % OX
    \draw[->, black] (-2,-2) -- (9,-2);
    \draw (9,-2) node[right]{\Large $\mathbb{R}^{k}$};
    
    % OY
    \draw[->, black] (-2,-2) -- (-2,9);
    \draw (-2,9) node[above]{\Large $\mathbb{R}^{n-k}$};

    % gamma
    \draw[-latex, black] (2,4) -- (2,1);
    \draw (2,2.5) node[right]{\Large $\gamma$};
    
    % nodes
    \node[circle, draw=black, fill=col0, inner sep=0pt,minimum size=0.4em] at (4,3) { };
    \draw (4.3,3) node[above]{\large $x$};

    \node[circle, draw=black, fill=col0, inner sep=0pt,minimum size=0.4em] at (4,0) { };
    \draw (4.3,0) node[above]{\large $x'$};

    \node[circle, draw=black, fill=col0, inner sep=0pt,minimum size=0.4em] at (4,-2) { };
    \draw (4.5,-2) node[above]{\large $p(x)$};

    \draw (0,7) node[left]{\large $(\delta, 0)$};    
    \draw (0,0) node[left]{\large $(0, 0)$}; 
    \draw (7,0) node[right]{\large $(0, \delta)$};     
  \end{tikzpicture}
  \caption{For $\delta\in \set{d_0,d_1}$, the covector $\gamma:=-1 \in \RR^{*}$ attains its fiberwise minimum
    on the convex hull of $(\delta,0)$ and $(0,\delta)$.}
  \label{fig:example}
\end{wrapfigure}

Consequently, regardless of the choice of $\epsilon$ in \Cref{alg:vert},
the algorithm in \cite{huggins2006} which we use to construct the
mixed fiber polytope $\mmo(\poly_0,\dots,\poly_k)$ always terminates
and computes a polytope $\poly'$ contained in
$\mmo(\poly_0,\dots,\poly_k)$.  When we are using this algorithm to
solve an implicitization problem, as sketched in the introduction, the
correctness of $\poly'$ may then a posteriori be verified by
attempting to interpolate the desired eliminant from $\poly'$, as in
\cite{rose2025}.  If this interpolation attempt fails, we can then
restart the computation of $\mmo(\poly_0,\dots,\poly_k)$ with a
smaller choice of $\epsilon$. \Cref{lem:refine} garantuees that this will
succeed after finitely many iterations.

We now illustrate \Cref{thm:main} and \Cref{alg:vert} with the following
example.

\begin{example}
  Let $n=2$, $k = 1$ and $d_0,d_1\in \ZZ_{>0}$. Let
  \[\mathcal{A}_0 = \set{\twov{0}{0}, \twov{0}{d_0}, \twov{d_0}{0}},\quad
    \mathcal{A}_1 = \set{\twov{0}{0}, \twov{0}{d_1}, \twov{d_1}{0}}.\] Let
  $\poly_0:=\conv{\mathcal{A}_0}$ and $\poly_1:=\conv{\mathcal{A}_1}$.  The polytopes
  $\poly_0$ and $\poly_1$ are triangles. If $f_0,f_1\in \CC[x,y]$ are
  generic of degree $d_0$ and $d_1$ respectively, then their Newton
  polytopes are exactly $\poly_0$ and $\poly_1$ and we expect their
  composite polynomial to be of degree $d_0d_1$, matching the Bezout
  bound associated to $f_0$ and $f_1$.  Correspondingly, in light of
  \Cref{thm:mfpelim}, we expect $\mmo(\poly_0,\poly_1)$ to be the
  interval $[0,d_0d_1]\subset \RR$. Note that at the vertex $d_0d_1$ of this
  interval the covector in $\gamma\in(\RR^1)^{*}$ given by multiplication with
  $-1$ is minimized.

  Let us therefore now compute the vertex
  $\mmo(\poly_0,\poly_1)^{\gamma}$ using \Cref{thm:main}/\Cref{alg:vert}. 

  In the notation of \Cref{thm:main}, the point configuration $\mathcal{A}_{\gamma}$ is given by 
  \[\mathcal{A}_{\gamma,0} = \set{\twov{0}{d_0}, \twov{d_0}{0}},\quad\mathcal{A}_{\gamma,0} =
    \set{\twov{0}{d_1}, \twov{d_1}{0}},\]
  see also \Cref{fig:example}.

  The configuration $\mathcal{A}_{\gamma}$ is itself a $1$-dimensional face of
  $\mathcal{A}$. Hence the $\prjb$-coherent subdivision of
  $\prjb(\mathcal{A})$ induced by $\gamma$ is the trivial one, consisting of
  $\prjb(\mathcal{A})$ itself. It is therefore in particular refined by any
  mixed subdivision.

  We have
  \[\prjb(\mathcal{A}_0) = \set{0,d_0},\quad
    \prjb(\mathcal{A}_1) = \set{0,d_1}.\]
  A mixed subdivision of $\prjb(\mathcal{A})$ is given by the two cells
  \begin{align*}
    &C_0 := (C_{0,0},C_{0,1}) := (\set{0},\set{0,d_1}),\\
    &C_1 = (C_{1,0}, C_{1,1}) := (\set{0,d_0},\set{d_1}),
  \end{align*}
  see also \Cref{ex:mixsub}. We further compute
  \begin{align*}
    &\prjb|_{\mathcal{A}_{\gamma,0}}^{-1}(C_{0,0}) = \set{\twov{d_0}{0}};\;\vol(C_{0}) = d_1,\\
    &\prjb|_{\mathcal{A}_{\gamma,1}}^{-1}(C_{1,1}) = \set{\twov{0}{d_1}};\;\vol(C_{1}) = d_0.\\
  \end{align*}
  Plugging this into \Cref{thm:main} yields
  \begin{align*}
    \mmo(\poly_{0},\poly_1)^{\gamma} &= d_1\cdot \prja\left(\set{\twov{d_0}{0}}\right) + d_0 \cdot  \prja\left(\set{\twov{0}{d_1}}\right)\\
    &=d_1\cdot\set{d_0} + d_0\cdot \set{0} = \set{d_0d_1}
  \end{align*}
  as expected.
\end{example}

\section{Implementation \& Experiments}
\label{sec:exam}

The source code of our implementation accompanying this paper
can be found at
\begin{center}
  \url{https://github.com/RafaelDavidMohr/MixedFiberPolytope}.
\end{center}
This implementation is written in the programming language
\texttt{julia} \cite{bezanson2017}. It includes \Cref{alg:vert} as
well as an implementation of the algorithm given in
\cite{huggins2006}. This algorithm computes a polytope $\poly$ given a
{\em vertex oracle} for $\poly$, i.e. a black box that returns the
vertex $\poly^{\gamma}$ for a generic covector $\gamma$. \Cref{alg:vert}
provides such a black box for mixed fiber polytopes.

To compute mixed subdivisions we rely on the \texttt{julia}-package
\texttt{MixedSubdivisions.jl} which itself is based on
\cite{jensen2016}. For necessary operations with polytopes, we make
use of the computer algebra system \texttt{Oscar} \cite{OSCAR-book}
which itself uses \texttt{polymake} \cite{assarf2017} to perform these
computations.

\subsection*{List of Examples}
\label{sec:exmpllist}

We compared our implementation to \texttt{TropicalImplicitization.jl}
\cite{rose2025} on the following implicitization problems:

\begin{enumerate}
\item \label{ex:1} In $\CC[x_0,x_1,x_2,x_3,s,t,u]$ we let
  \begin{align*}
    &f_0 = x_0- (s^2t + s^2u + 4stu + 3su^2 + 2t^3 + 4t^2u + 2tu^2 + 2u^3),\\
    &f_1 = x_1- ( -s^3 - 2s^2u - 2st^2 - stu + su^2 - 2tu^2 + 2u^3),\\
    &f_2 = x_2 - (-s^3 - 2s^2t - 3s^2u - 3st^2 - 3stu - 2su^2 + 2t^2u - 2tu^2),\\
    &f_3 = x_3 - (s^3 + s^2t + s^2u - su^2 + t^3 + t^2u - tu^2 - u^3).
  \end{align*}
  This is Example 3.3.4 in \cite{buse2005}.
% \item \label{ex:2} In $\CC[x_0,x_1,x_2,x_3,x,y,z]$ we let
%   \begin{align*}
%     &f_0:=x_0-(xy+z+1),f_1:=x_1-(xz+y+1),\\
%     &f_2:=x_2-(yz+x+1),f_3:=x_3-(x^3+y^5+z^7).
%   \end{align*}
%   This example appears on page 113 of \cite{sturmfels2008}.
\item \label{ex:3} In $\CC[x_0,x_1,x_2,x_3,x_4,x_5,s,t,u,v,w]$ we let
  \begin{align*}
    &f_0:=x_0-(s^3 - u^2 - t - 3s - u +w),\\
    &f_1:=x_1-(u^2-sw-11),\;f_2:=x_2-(s^2-5u-v),\\
    &f_3:=x_3-(u^2-s-v-w),\;f_4:=x_4-(u^2+7s+t),\\
    &f_5:=x_5-(v^2+s^2-s-t-w).
  \end{align*}
  This is Example A.22 in \cite{abbott2017}.
\item Another way to generate examples is to take any square
  polynomial system $f_0,\dots,f_k$ in variables $y_0,\dots,y_k$ and
  consider the implicitization problem given by the equations
  \[x_i-f_i(a,y_1,\dots,y_k);\; i = 0,\dots,k,\] where $a$ is a
  randomly chosen value. Using this method, we generated examples for the
  well-known benchmark systems
  \href{https://gitlab.lip6.fr/eder/msolve-examples/-/raw/master/zero-dimensional/kat6-qq.ms}{katsura
    6}, \href{https://mathexp.eu/mohr/cyclic5.ms}{cyclic 5},
  \href{https://gitlab.lip6.fr/eder/msolve-examples/-/raw/master/zero-dimensional/noon6-qq.ms}{noon
    6} and
  \href{https://gitlab.lip6.fr/eder/msolve-examples/-/raw/master/zero-dimensional/noon7-qq.ms}{noon
    7}.
\end{enumerate}

For each of these examples we extracted the supports
$\mathcal{A}_0,\dots,\mathcal{A}_k$ of the defining equations. The resulting point
configuration is the input of our algorithm. The running times (in
seconds, unless otherwise stated) for these examples are recorded in
\Cref{tab:bench}. We rounded any time below 0.1 seconds to zero. In
addition, we record in \Cref{tab:bench} the number of mixed
subdivisions computed by our algorithm and the number of faces of
dimension $k+1$ of the Minkowski sum $\poly_0+\dots + \poly_k$. As
mentioned in the introduction, for each of these faces tropical
methods have to compute one mixed subdivision.

\Cref{tab:bench} illustrates that our algorithm computes vastly fewer
mixed subdivisions than tropical elimination on the chosen examples.
For katsura 6 and noon 6, just the computation of the faces of dimension
$k+1$ of the associated Minkowski sum exceeded the time indicated
in the 4th column. For noon 7, this computation ran out of memory.

The following software versions were used to record the data in
\Cref{tab:bench}: \texttt{julia} @ v1.11.3, \texttt{Oscar} @ v1.2.2,
\texttt{MixedSubdivisions.jl} @ v1.1.5 and
\texttt{TropicalImplicitization.jl} @ v1.0.0. The computations were
performed on a laptop with 32GB of memory and an Intel Core i7-1185G7
CPU.

\begin{table}[H]
\centering
\captionsetup{justification=centering}
\caption{Running time (in sec) for our examples}
\begin{tabular}{|l|c|c|c|c|}
  \hline
  \multirow{2}{*}{Example} & \multicolumn{2}{c|}{Our Algorithm} & \multicolumn{2}{c|}{Tropical Elimination} \\ \cline{2-5}
                    & Time & \$ & Time & \pounds \\
  \hline \hline
  Example \ref{ex:1} & 0 & 20 & 5.57 & 83\\
  \hline
  Example \ref{ex:intro} & 0.1 & 38 & 39.7 & 651\\
  \hline
  Example \ref{ex:3} & 0.1 & 75 & >15 min. & 72214\\
  \hline
  katsura 6 & 0.1 & 71 & >15 min. & ? \\
  \hline
  cyclic 5 & 0.5 & 160 & >15 min. & 13469 \\
  \hline
  noon 6 & 10.1 & 1722 & >30 min. & ? \\
  \hline
  noon 7 & 345.6 & 18600 & OOM & ?\\
  \hline
\end{tabular}
\caption*{\$: Number of mixed subdivisions computed

  \pounds: Number of faces of dim. $k+1$ of Minkowski sum

  OOM: Out of memory}
\label{tab:bench}
\end{table}

\section{An Application to Differential Elimination}
\label{sec:diffelim}

An interesting potential application of our algorithm comes
from {\em differential elimination}, which we illustrate here.

We consider a {\em dynamical system} of the shape
\begin{equation}
  \label{eq:dyn}
 x'_1 - g_1 = \cdots = x'_n - g_n = 0,
\end{equation}
where $g_i\in \CC[x_1,\dots,x_n]$. The $x_i$ should be thought of as
differentiable functions $x_i(t)$ in a time variable
$t$. Frequently, one wishes to compute a {\em minimal} differential
relation $\minpol$ satisfied by the coordinate $x_1$ of a solution
vector $(x_1,\dots,x_n)$ to \eqref{eq:dyn}. This problem was recently
considered by the second author of this work in \cite{mukhina2025}. It
turns out that $\minpol\in \CC[x_1,x_1',\dots,x_1^{(n)}]$. In
\cite{mukhina2025}, a polytope $\poly_{\text{min}}$ containing the
Newton polytope of $\minpol$ was constructed by giving explicit
inequalities satisfied by the elements in the support of $\minpol$
(Theorem 1 in \cite{mukhina2025}). Then one can compute $\minpol$ out
of $\poly_{\text{min}}$ (Algorithm 1 in \cite{mukhina2025}) using an
evaluation-interpolation approach by solving a linear system with number of
equations and variables equal to the number of lattice points in
$\poly_{\text{min}}$.

We consider an example. Write $d:=\deg g_1$,
$D:= \max_{2\leq i \leq n}\deg g_i$ and consider the system
\begin{equation*}
  \begin{cases}
    x'_1 = x_1^2 + x_1 x_2 + x_2^2 + 1,\\ 
    x_2' =x_2.
  \end{cases}
\end{equation*}
The bounds obtained in \cite{mukhina2025} depend on $n,d$ and $D$. In
this case $n = 2, d = 2, D = 1$.  We introduce a new variable $y$ and write
$y, x_1', x_1''$ as polynomials in the variables $x_1, x_2, x_1'$:
\begin{equation}
  \label{eq:exdyn}
  \begin{split}
    &y = x_1,\\
    &x_1' = x_1^2 + x_1 x_2 + x_2^2 + 1,\\
    &x_1'' = x'_1 (2 x_1 + x_2) + x_2(2 x_2 + 1).
  \end{split}
\end{equation}
From this system, one can then compute $\minpol$ by eliminating the
variables $x_1,x_2$.

Note that to the system \eqref{eq:exdyn} we can also apply our
algorithm by computing the mixed fiber polytope w.r.t. the variables
$y,x_1,x_1'$. This mixed fiber polytope will contain the Newton
polytope of $\minpol$ and $\minpol$ can then be computed using this
mixed fiber polytope in combination with Algorithm 1 in
\cite{mukhina2025}. It is then interesting to ask how many lattice
points this mixed fiber polytope has compared to $\poly_{\text{min}}$,
both from a theoretical viewpoint (i.e. is it possible to improve the
bounds depending on $n,d$ and $D$ in \cite{mukhina2025} using mixed
fiber polytopes) and from a practical viewpoint (i.e. is it beneficial
to use mixed fiber polytopes instead of $\poly_{\text{min}}$ for
specific dynamical systems to exploit their sparsity patterns).

For the case $d\leq D$, the obtained bounds were shown to be sharp in
\cite{mukhina2025} but obtaining sharp bounds for the case $d>D$
remains open.

In \Cref{tab:mixedfiber} we compare the number of lattice points in
$\poly_{\text{min}}$ with the number of lattice points in the
associated mixed fiber polytope. The column $\%$ shows the difference
between these numbers as a percentage of the first number. The first 6
rows correspond to dynamical systems with random dense polynomials
$g_i$ satisfying the indicated degree constraints. For the case
$n=4,d=3,D=2$, we were unable to compute all integer points in the
polytope extracted out of \cite{mukhina2025} within twelve hours of
computation. The last three rows of \Cref{tab:mixedfiber} correspond
to specific choices of sparse $g_i$.

\begin{table}[H]
  \small
\centering
\caption{Comparison between the bound 
  from \cite{mukhina2025} and mixed fiber polytope computations}
\begin{tabular}{ |c|c|c|c|c| } 
	\hline
	\multirow{2}{*}{$[n,d, D]$} & \multicolumn{2}{c|}{\# of lattice points} & \multirow{2}{*}{\%} \\ \cline{2-3}
    & From \cite{mukhina2025} & Mixed Fiber Polytope & \\
	\hline\hline	
    [3,2,1] & 271 & 266 &  2\%\\ 
	\hline
	[3,3,1] & 9520 & 8661 &  10\%\\ 
	\hline
    [3,3,2] & 25788 & 25525 &  1\%\\ 
	\hline
	\hline
	[4,2,1] & 11021 & 10798  &  2\%\\ 
	\hline
	[4,3,1] & 10639882 & 9694938  &  9\% \\ 
    \hline
    [4,3,2] & ? & 391478920 &   \% \\
  \hline
  \hline
  Lotka Voltera 3D \cite{mukhina2025} & 31757 & 21341 & 32\%\\
  \hline
  Lotka Voltera 4D \cite{wangersky1978} & 2379301 & 15086 & 99\%\\
  \hline
  Blue Sky Catastrophe \cite{vangorder2013} & 65637 & 26698 & 59\% \\
  \hline
\end{tabular}
\label{tab:mixedfiber}
\end{table}

\Cref{tab:mixedfiber} indicates that, while mixed fiber polytopes
probably do not yield a substantial improvement of Theorem 1 in
\cite{mukhina2025}, our algorithm can improve practical computations
for specific sparse systems by shrinking the size of the linear system
to solve to compute $\minpol$.

\newpage
\printbibliography
\end{document}